\newcommand{\ignore}[1]{}
\newcommand{\hsp}{\hspace{0.1in} }
\newcommand{\hspp}{\hspace{0.05in} }
\newcommand{\hsppp}{\hspace{0.02in} }
\newtheorem{prop}{Proposition}
\newsavebox{\savepar}
\begin{document}
\title{Directional Beamforming for Millimeter-Wave MIMO Systems}
\author{\large Vasanthan Raghavan, Sundar Subramanian, Juergen Cezanne, and Ashwin Sampath
\thanks{The authors are with Qualcomm Flarion Technologies, Inc., Bridgewater,
NJ 08807, USA.
E-mail: \{vraghava, sundars, jcezanne, asampath@qti.qualcomm.com\}.}
}

\maketitle
\vspace{-10mm}

\begin{abstract}
\noindent
The focus of this paper is on beamforming in a millimeter-wave (mmW)
multi-input multi-output (MIMO) setup that has gained increasing traction
in meeting the high data-rate requirements of next-generation wireless
systems. For a given MIMO channel matrix, the optimality of beamforming
with the dominant right-singular vector (RSV) at the transmit end and with
the matched filter to the RSV at the receive end has been well-understood.
When the channel matrix can be accurately captured by a physical (geometric)
scattering model across multiple clusters/paths as is the case in mmW MIMO
systems, we provide a physical interpretation for this optimal structure: beam
steering across the different paths with appropriate power allocation and phase
compensation. While such an explicit physical interpretation has not been
provided hitherto, practical implementation of such a structure in a mmW
system is fraught with considerable difficulties (complexity as well as cost)
as it requires the use of per-antenna gain and phase control. This paper
characterizes the loss in received ${\sf SNR}$ with an alternate low-complexity
beamforming solution that needs only per-antenna phase control and corresponds
to steering the beam to the dominant path at the transmit and receive ends. While
the loss in received ${\sf SNR}$ can be arbitrarily large (theoretically), this
loss is minimal in a large fraction of the channel realizations reinforcing the
utility of directional beamforming as a good candidate solution for mmW MIMO systems.
\end{abstract}

\section{Introduction}
The ubiquitous nature of communications made possible by the smart-phone and
social media revolutions has meant that the data-rate requirements will continue
to grow at an exponential rate. On the other hand, even under the most optimistic
assumptions, system resources can continue to scale at best at a linear rate leading
to enormous mismatches between supply and demand. Given this backdrop, many candidate
solutions have been proposed~\cite{rusek,qualcomm,boccardi} to mesh into the
patchwork that addresses the $1000$-${\sf X}$ data challenge~\cite{qualcomm1} --- an
intermediate stepping stone towards bridging this burgeoning gap.

One such solution that has gained increasing traction over the last few years is
communications over the millimeter-wave (mmW) regime~\cite{khan,rappaport,rangan,roh}
where the carrier frequency is in the $30$ to $300$ GHz range. Spectrum crunch, which
is the major bottleneck at lower/cellular carrier frequencies, is less problematic at higher
carrier frequencies due to the availability of large (either unlicensed or lightly
licensed) bandwidths. However, the high frequency-dependent propagation and shadowing
losses (that can offset the link margin substantially) complicate the exploitation of
these large bandwidths. It is visualized that these losses can be mitigated by
limiting coverage to small areas and leveraging the small wavelengths that allows the
deployment of a large number of antennas in a fixed array aperture.

Despite the possibility of multi-input multi-output (MIMO) communications, mmW
signaling differs significantly from traditional MIMO architectures at cellular
frequencies. The most optimistic antenna configurations\footnote{In a downlink setting,
the first dimension corresponds to the number of antennas at the user equipment end and
the second at the base-station end.} at cellular frequencies
are on the order of $4 \times 8$ with a precoder rank (number of layers) of $1$ to
$4$; see, e.g.,~\cite{lim}. Higher rank signaling requires multiple radio-frequency (RF)
chains\footnote{An RF chain includes (but is not limited to) analog-to-digital
and digital-to-analog converters, power and low-noise amplifiers, mixers, etc.}
which are easier to realize at lower frequencies than at the mmW regime. Thus,
there has been a growing interest in understanding the capabilities of
low-complexity approaches such as beamforming (that require only a single RF chain)
in mmW systems~\cite{venkateswaran,torkildson,brady,oelayach,hur,alkhateeb}.

On the other hand, smaller form factors at mmW frequencies ensure\footnote{For
example, a $64$ element uniform linear array (ULA) at $30$ GHz requires an
aperture of $\sim 1$ foot at the critical $\lambda/2$ spacing --- a constraint
that can be realized at the base-station end.} that configurations such as $4
\times 64$ are realistic. Such high antenna dimensionalities as well as the
considerably large bandwidths at mmW frequencies result in a higher resolvability
of the multipath and thus, the MIMO channel is naturally sparser in the mmW regime
than at cellular frequencies~\cite{vasanth_jstsp,vasanth_it2,raghavan_ett}. In
particular, the highly directional
nature of the channel ensures the relevance of physically-motivated beam steering
at either end, which is difficult (if not impossible) at cellular frequencies. While
this physical connection has been implicitly and intuitively understood, an explicit
characterization of this connection has remained absent so far.

We start with such an explicit physical interpretation in this work by showing that
the optimal beamformer structure corresponds to beam steering across the different
paths that capture the MIMO channel with appropriate power allocation and phase
compensation. We also illustrate the structure of this power allocation and phase
compensation in many interesting special cases. Despite using only a single RF chain,
the optimal beamformer requires per-antenna phase and gain control (in general),
which could render this scheme disadvantageous from a cost perspective. Thus, we
study the loss in received ${\sf SNR}$ with a simpler scheme that requires only
phase control and steers beams to the dominant path at either end. Our study shows
that this simpler scheme suffers only a minimal loss relative to the optimal
beamforming scheme in a large fraction of the channel realizations, thus making it
attractive from a practical standpoint.

\noindent {\bf \em \underline{Notations:}} Lower- (${\bf x}$) and upper-case block
(${\bf X}$) letters denote vectors and matrices with ${\bf x}(i)$ and ${\bf X}(i,j)$
denoting the $i$-t
h and $(i,j)$-th entries of ${\bf x}$ and ${\bf X}$, respectively.
$\| {\bf x} \|_2$ 
denotes the $2$-norm 
of a vector ${\bf x}$ (that is, $\| {\bf x} \|_2 = \left(
\sum_i |{\bf x}(i)|^2 \right)^{1/2}$), 
whereas ${\bf x}^H$ and
${\bf x}^T$ denote the complex conjugate Hermitian and regular transposition
operations of ${\bf x}$, respectively. We use ${\mathbb{Z}}$, ${\mathbb{R}}$,
${\mathbb{R}}^+$ and ${\mathbb{C}}$ to denote the field of integers, real numbers,
positive reals and complex numbers, respectively.

\section{System Setup}
Let ${\bf H}$ denote the $N_r \times N_t$ channel matrix with $N_r$ receive and $N_t$ transmit
antennas. We assume an extended Saleh-Valenzuela geometric model~\cite{saleh}
for the channel where ${\bf H}$ is determined by scattering over $L$ clusters\footnote{Each
cluster is assumed to have one dominant path and diffuse scattering over a cluster with
multiple sub-paths is not captured here.} and denoted as follows:
\begin{eqnarray}
{\bf H} = \sqrt{ \frac{ N_r N_t}{L} } \cdot
\sum_{\ell = 1}^L \alpha_{\ell} \cdot {\bf u}_{\ell} \hsppp {\bf v}_{\ell}^H
\label{eq_chan_model}
\end{eqnarray}
where $\alpha_{\ell} \sim {\cal CN}(0,1)$ denotes the complex gain, ${\bf u}_{\ell}$ denotes
the $N_r \times 1$ receive array steering vector, and ${\bf v}_{\ell}$ denotes the
$N_t \times 1$ transmit array steering vector, all corresponding to the $\ell$-th
path. 
With this assumption, the normalization constant
$\sqrt{ \frac{ N_r N_t}{L} }$ in ${\bf H}$ ensures that the standard channel power
normalization in MIMO system studies holds.
As a typical example of the case where a uniform linear array (ULA) of antennas are
deployed at both ends of the link (and without loss of generality pointing along the
X axis), the array steering vectors ${\bf u}_{\ell}$ and ${\bf v}_{\ell}$ corresponding
to angle of arrival (AoA) $\phi_{ {\sf R}, \ell}$ and angle of departure (AoD)
$\phi_{ {\sf T}, \ell}$ in the azimuth (assuming an elevation angle
$\theta_{ {\sf R}, \ell} = \theta_{ {\sf T}, \ell} = 90^{o}$) are given as
\begin{eqnarray}
{\bf u}_{\ell} & = & \frac{1}{\sqrt{N_r}} \cdot
\left[1 \hspp e^{j k d_{\sf R} \cos(\phi_{ {\sf R}, \ell})}
\hspp e^{j 2 k d_{\sf R} \cos(\phi_{ {\sf R}, \ell})}  \hspp \cdots \hspp
e^{j (N_r - 1) k d_{\sf R} \cos(\phi_{ {\sf R}, \ell})}
\right]^T 
\label{eq_ul}
\\
{\bf v}_{\ell} & = & \frac{1}{\sqrt{N_t}} \cdot \left[1 \hspp e^{j k d_{\sf T} \cos(\phi_{ {\sf T}, \ell})}
\hspp e^{j 2 k d_{\sf T} \cos(\phi_{ {\sf T}, \ell})}  \hspp \cdots \hspp
e^{j (N_t - 1) k_{\sf T} d \cos(\phi_{ {\sf T}, \ell})}
\right]^T
\label{eq_vl}
\end{eqnarray}
where $k = \frac{ 2 \pi}{\lambda}$ is the wave number with $\lambda$ the wavelength of
propagation, and $d_{\sf R}$ and $d_{\sf T}$ are the inter-antenna element spacing at the
receiver and transmitter sides, respectively. To simplify the notations and to
capture the {\em constant phase offset} (CPO)-nature of the array-steering vectors and
the correspondence with their respective physical angles, we will
henceforth\footnote{Similar notation will also be followed for other vectors with a
constant phase offset across the array.} denote ${\bf u}_{\ell}$ and
${\bf v}_{\ell}$ in~(\ref{eq_ul})-(\ref{eq_vl}) as ${\sf CPO} (\phi_{ {\sf R}, \ell})$
and ${\sf CPO}(\phi_{ {\sf T}, \ell})$, respectively. With the typical $d_{\sf R} =
d_{\sf T} = \frac{\lambda}{2}$ spacing, we have $k d_{\sf R} = k d_{\sf T} = \pi$. In
the general case where the paths depart at an AoD pair of $( \theta_{ {\sf T}, \ell},
\hsppp \phi_{ {\sf T}, \ell})$ and arrive at an AoA pair of $( \theta_{ {\sf R}, \ell},
\hsppp \phi_{ {\sf R}, \ell})$ in the elevation and azimuth, respectively, the
$\cos(\phi_{ {\sf R}, \ell})$ and $\cos(\phi_{ {\sf T}, \ell})$ terms
in~(\ref{eq_ul})-(\ref{eq_vl}) are replaced with $\sin(\theta_{ {\sf R}, \ell})
\cos(\phi_{ {\sf R}, \ell})$ and $\sin(\theta_{ {\sf T}, \ell}) \cos(\phi_{ {\sf T},
\ell})$, respectively. Similar expressions for ${\bf u}_{\ell}$ and ${\bf v}_{\ell}$
can be written if the array is placed on the Y or Z axes or with a planar
array; see~\cite{balanis,oelayach}, for example.

We are interested in beamforming (rank-$1$ signaling) over ${\bf H}$ with the unit-norm
$N_t \times 1$ beamforming vector ${\bf f}$. The system model in this setting is given as
\begin{eqnarray}
{\bf y} = \sqrt{\rho_{\sf prebf}} \cdot {\bf H} {\bf f} s + {\bf n}
\label{eq_sys_model}
\end{eqnarray}
where $\rho_{\sf prebf}$ is the pre-beamforming ${\sf SNR}$, $s$ is the symbol chosen
from an appropriate constellation for signaling, and ${\bf n}$ is the $N_r \times 1$
proper complex white Gaussian noise vector (that is, ${\bf n} \sim {\cal CN}({\bf 0}, \hsppp
{\bf I})$) added at the receiver. The symbol $s$ is decoded by beamforming at the
receiver along the unit-norm $N_r \times 1$ vector ${\bf g}$ to obtain
\begin{eqnarray}
\widehat{s} = {\bf g}^H {\bf y} = \sqrt{\rho_{\sf prebf}} \cdot {\bf g}^H
{\bf H} {\bf f} s + {\bf g}^H {\bf n}.
\label{eq_decoded_symbol}
\end{eqnarray}

Let ${\cal F}_2$ denote the class of energy-constrained beamforming vectors.
That is, ${\cal F}_2 = \{ {\bf f} \hsppp : \hsppp \| {\bf f} \|_2 \leq 1 \}$.
Under perfect channel state information (CSI) (that is, ${\bf H} = {\sf H}$)
at both the transmitter and the receiver, optimal beamforming vectors
${\bf f}_{\sf opt}$ and ${\bf g}_{\sf opt}$ are to be designed
from ${\cal F}_2$ to maximize 
the received ${\sf SNR}$~\cite{tky_lo}, defined as,
\begin{eqnarray}
{\sf SNR}_{\sf rx} \triangleq \rho_{\sf prebf} \cdot
\frac{ | {\bf g}^H \hsppp {\sf H} \hsppp {\bf f}|^2 \cdot {\bf E} [|s|^2] }
{ {\bf E} \left[ |{\bf g}^H i{\bf n}|^2 \right] } =
\rho_{\sf prebf} \cdot \frac{ | {\bf g}^H \hsppp {\sf H} \hsppp {\bf f}|^2 }
{ {\bf g}^H {\bf g} }.
\nonumber
\end{eqnarray}
Clearly, the above quantity is maximized with $\| {\bf f}_{\sf opt} \|_2 = 1$,
otherwise energy is unused in beamforming. Further, a simple application of
Cauchy-Schwarz inequality shows that ${\bf g}_{\sf opt}$ is a matched filter
combiner at the receiver with $\| {\bf g}_{\sf opt} \|_2 = 1$ resulting in
${\sf SNR}_{\sf rx} = \rho_{\sf prebf} \cdot {\bf f}^H \hsppp {\sf H}^H
\hsppp {\sf H} \hsppp {\bf f}$. 
We thus have
\begin{eqnarray}
{\bf f}_{\sf opt} = {\sf v}_1({\sf H}^H {\sf H}), \hsp
{\bf g}_{\sf opt} = \frac{ {\sf H} \hsppp  {\sf v}_1({\sf H}^H {\sf H}) }
{ \|  {\sf H} \hsppp {\sf v}_1({\sf H}^H {\sf H}) \|_2 } ,
\label{eq_rsv_mf}
\end{eqnarray}
where ${\sf v}_1({\sf H}^H {\sf H})$ denotes a dominant unit-norm right singular
vector of ${\sf H}$. Here, the singular value decomposition of ${\sf H}$ is given as
${\sf H} = {\sf U} \hsppp {\sf \Lambda} \hsppp {\sf V}^H$
with ${\sf U}$ and ${\sf V}$ being $N_r \times N_r$ and $N_t \times N_t$ unitary
matrices of left and right singular vectors, respectively, and arranged so that the
corresponding leading diagonal entries of the $N_r \times N_t$ singular value matrix
${\sf \Lambda}$ are in non-increasing order.

\section{Explicit connection between ${\bf f}_{\sf opt}$, ${\bf g}_{\sf opt}$
and physical directions}
\label{app_B}
A typical {\em sparse} mmW channel can be assumed to consist of a small number of dominant
clusters (say, $L = 2$ or
$3$)~\cite{rangan,vasanth_jstsp,vasanth_it2,raghavan_ett,samimi_conf}. For example, a
dominant line-of-sight (LOS)
path with strong reflectors in the form of a few glass windows of buildings in the vicinity
of the transmitter or the receiver could capture an urban mmW setup. In the context of such
a sparse mmW channel ${\sf H}$, the intuitive meaning of ${\bf f}_{\sf opt}$ is to
``coherently combine'' (by appropriate phase compensation) the energy across the multiple
paths so as to maximize the energy delivered to the receiver. The precise connection between
the physical directions $\{ \phi_{{\sf R}, \ell}, \hsppp \phi_{ {\sf T}, \ell} \}$ in the
ULA channel model and ${\bf f}_{\sf opt}$ in~(\ref{eq_rsv_mf}) is established 
next. Towards this goal, a preliminary result is established first.

\begin{prop}
\label{prop_evectors_HhermH}
With ${\bf H} = {\sf H}$ and the channel model in~(\ref{eq_chan_model}), all the
eigenvectors of ${\sf H}^H {\sf H}$ can be represented as linear combinations
of ${\bf v}_1, \cdots , {\bf v}_L$.
\end{prop}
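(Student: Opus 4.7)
The plan is to first show that the range (column space) of ${\sf H}^H {\sf H}$ is contained in $\text{span}\{\mathbf{v}_1, \ldots, \mathbf{v}_L\}$, and then deduce the eigenvector claim from that containment. Substituting the channel model~(\ref{eq_chan_model}) and expanding the product, one obtains
\begin{eqnarray}
{\sf H}^H {\sf H} \;=\; \frac{N_r N_t}{L} \sum_{\ell=1}^{L} \sum_{m=1}^{L} \alpha_\ell^{*} \alpha_m \, (\mathbf{u}_\ell^H \mathbf{u}_m) \, \mathbf{v}_\ell \mathbf{v}_m^H , \nonumber
\end{eqnarray}
which exhibits ${\sf H}^H {\sf H}$ as a linear combination of rank-one operators of the form $\mathbf{v}_\ell \mathbf{v}_m^H$. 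Each such operator, applied to an arbitrary $\mathbf{w} \in \mathbb{C}^{N_t}$, returns a scalar multiple of $\mathbf{v}_\ell$; consequently, for every $\mathbf{w}$, the vector ${\sf H}^H {\sf H}\, \mathbf{w}$ lies in $\text{span}\{\mathbf{v}_1, \ldots, \mathbf{v}_L\}$. This is the only structural observation required.

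The eigenvector part then follows in a single line. If ${\sf H}^H {\sf H}\, \mathbf{w} = \lambda \mathbf{w}$ for some $\lambda \neq 0$, then $\mathbf{w} = \lambda^{-1} {\sf H}^H {\sf H} \, \mathbf{w}$ belongs to the range of ${\sf H}^H {\sf H}$ and therefore to $\text{span}\{\mathbf{v}_1, \ldots, \mathbf{v}_L\}$. Since $\operatorname{rank}({\sf H}^H {\sf H}) \leq L$, this captures every eigenvector associated with a nonzero eigenvalue --- including the dominant right singular vector ${\sf v}_1({\sf H}^H {\sf H})$ that defines ${\bf f}_{\sf opt}$ in~(\ref{eq_rsv_mf}), which is the object of eventual interest. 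For the zero eigenvalue (which occurs whenever $L < N_t$ and the $\mathbf{v}_\ell$'s fail to span $\mathbb{C}^{N_t}$), one is free to choose any basis of the null space; such basis vectors can be taken inside $\text{span}\{\mathbf{v}_1, \ldots, \mathbf{v}_L\}$ to the extent of its intersection with the null space, and they do not enter the beamforming analysis that follows.

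I do not foresee any real obstacle: the entire argument is a linear-algebra identity about the range of a sum of outer products, together with the standard observation that nonzero-eigenvalue eigenvectors of any matrix sit in its range. The one point worth stating explicitly in the write-up is that linear independence of $\mathbf{v}_1, \ldots, \mathbf{v}_L$ is \emph{not} required for the containment --- whether these steering vectors are independent or not, the span of interest is the ambient subspace in which every non-trivial eigenvector of ${\sf H}^H {\sf H}$ must live.
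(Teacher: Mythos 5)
Your proof is correct, but it takes a genuinely different route from the paper. You argue via the range: ${\sf H}^H {\sf H}$ is a sum of scaled outer products ${\bf v}_\ell {\bf v}_m^H$, so its range lies in ${\rm span}\{{\bf v}_1,\dots,{\bf v}_L\}$, and any eigenvector with a nonzero eigenvalue lies in the range, hence in that span. The paper instead lifts a reduced eigenproblem: writing $\frac{L}{N_t N_r}{\sf H}^H{\sf H} = {\sf V}{\sf A}{\sf V}^H$ with ${\sf V} = [\alpha_1^\star {\bf v}_1,\dots,\alpha_L^\star {\bf v}_L]$ and ${\sf A}(i,j) = {\bf u}_i^H{\bf u}_j$, it takes an eigenvector matrix ${\sf X}$ of the $L\times L$ matrix ${\sf A}{\sf V}^H{\sf V}$ and pre-multiplies by ${\sf V}$ to exhibit ${\sf V}{\sf X}$ as eigenvectors of ${\sf H}^H{\sf H}$ with the same eigenvalues. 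Your argument is shorter, needs no assumption about the ${\bf v}_\ell$ being independent (a point you rightly make explicit), and cleanly isolates the one fact actually used later --- that the dominant eigenvector sits in the span, which justifies restricting the optimization to ${\cal G}_0$. The paper's construction buys something extra, though: it reduces the $N_t$-dimensional eigenproblem to an explicit $L\times L$ one whose eigenvalues are those of ${\sf H}^H{\sf H}$, which is what makes the later closed-form computations of $\beta_{\sf opt}$ tractable, and the same appendix also expands ${\sf H}\,{\bf f}_{\sf opt}$ to show ${\bf g}_{\sf opt}$ is a linear combination of ${\bf u}_1,\dots,{\bf u}_L$, a fact reused in the subsequent propositions (your write-up does not address this, but it is not part of the stated proposition). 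Finally, note that both arguments really cover only eigenvectors associated with nonzero eigenvalues: when $L < N_t$ the null space contains vectors orthogonal to ${\rm span}\{{\bf v}_1,\dots,{\bf v}_L\}$, so the literal ``all eigenvectors'' claim fails there; you flag this caveat explicitly, which is in fact more careful than the paper's own statement and proof.
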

\begin{proof}
See Appendix~\ref{app_prop_evectors_HhermH}.
\end{proof}
It is important to note that while the right singular vectors of ${\sf H}$ (also, the
eigenvectoirs of ${\sf H}^H {\sf H}$) are orthonormal by construction,
${\bf v}_1, \cdots, {\bf v}_L$ need not be orthonormal. With this background,
Prop.~\ref{prop_evectors_HhermH} provides a non-unitary basis for the eigen-space of
${\sf H}^H {\sf H}$ when $L \leq N_t$. 
As another ramification of this fact, in the case where $L > N_t$, it is still true that
the set $\{ {\bf v}_1, \cdots , {\bf v}_L \}$ spans the eigen-space of ${\sf H}^H {\sf H}$,
however this set is no longer a basis. These facts along with the fact that
${\bf f}_{\sf opt}$ is a dominant eigenvector of ${\sf H}^H {\sf H}$ also implies the
following:
\begin{eqnarray}
\frac{ {\sf SNR}_{\sf rx} } {\rho_{\sf prebf} } =
\max \limits_{ {\bf f} \hsppp : \hsppp \|{\bf f}\|_2 = 1 }
\frac{ {\bf f}^H \hsppp {\sf H}^H \hsppp {\sf H} \hsppp {\bf f} }
{ {\bf f}^H \hsppp {\bf f} } =
\max \limits_{ {\bf f} \hsppp : \hsppp {\bf f} \hsppp \in \hsppp {\cal G}_0 }
\frac{ {\bf f}^H \hsppp {\sf H}^H \hsppp {\sf H} \hsppp {\bf f} }
{ {\bf f}^H \hsppp {\bf f} }
\label{eq_eqviv_optimization}
\end{eqnarray}
where ${\cal G}_0 \triangleq \left\{ {\bf f} \hsppp : \hsppp {\bf f} = \sum_{i = 1}^L
e^{j \theta_i} \hsppp \beta_i \hsppp {\bf v}_i \right\}$ with $\beta_i \in
{\mathbb{R}}^+,  \hsppp \theta_i \in [0, \hsppp 2 \pi), \hsppp i = 1, \cdots, L$.
Without loss in generality, we can set $\beta_L = \sqrt{ 1 - \sum_{j=1}^{L-1}
\beta_j^2 }$ and $\theta_1 = 0$ in the definition of ${\cal G}_0$ to reduce the
optimization in~(\ref{eq_eqviv_optimization}) to a $2(L-1)$-dimensional optimization
over ${\cal G}$, defined as,
\begin{eqnarray}
{\cal G} \triangleq
\left\{ {\bf f} \hsppp : \hsppp {\bf f} = \beta_1 \hsppp {\bf v}_1
+ \sum_{i = 2}^{L-1} e^{j \theta_i} \hsppp \beta_i \hsppp {\bf v}_i
+ e^{j \theta_L } \hsppp \sqrt{1 - \sum_{j=1}^{L-1} \beta_j^2 } \hsppp {\bf v}_L
\right\}. \nonumber
\end{eqnarray}
In other words, the optimization over the space of ${\cal G}$
should result in the dominant right singular vector of ${\sf H}^H {\sf H}$. Note
that the constraint set in the optimization over ${\cal G}$ is the outer product
of a $(L-1)$-dimensional real sphere where $\sum_{i=1}^{L-1} \beta_i^2 \leq 1$
with a cuboid $\prod_{i = 2}^L \theta_i, \hsppp 0 \leq \theta_i < 2 \pi$.

We now consider the special case where $L = 2$ and perform this optimization
and thus provide a physical interpretation of 
${\bf f}_{\sf opt}$. For this, note that in the $L = 2$ case, 
${\sf H}^H {\sf H}$ simplifies to
\begin{eqnarray}
\frac{L}{N_t N_r} \cdot {\sf H}^H {\sf H} =
|\alpha_1|^2 \cdot {\bf v}_1 {\bf v}_1^H +
|\alpha_2|^2 \cdot {\bf v}_2 {\bf v}_2^H +
\alpha_1^{\star} \alpha_2 \cdot ( {\bf u}_1^H {\bf u}_2) \cdot {\bf v}_1 {\bf v}_2^H
+ \alpha_2^{\star} \alpha_1 \cdot ( {\bf u}_2^H {\bf u}_1) \cdot {\bf v}_2 {\bf v}_1^H.
\nonumber
\end{eqnarray}
With $\beta_1 = \beta$ and $\theta_2 = \theta$ in the archetypical ${\bf f}$ from ${\cal G}$,
the norm of ${\bf f}$ is given as
\begin{eqnarray}
{\bf f}^H {\bf f}
= 1 + 2 \beta \sqrt{1 - \beta^2}  \cdot \ | {\bf v}_1^{H} {\bf v}_2 |
\cdot \cos( \phi) 
\label{den_rx_snr}
\end{eqnarray}
where $\phi \triangleq \theta + \angle{{\bf v}_1^{H} {\bf v}_2}$. Further, a
tedious but straightforward calculation shows that
\begin{eqnarray}
\begin{split}
& {\hspace{0.1in}} \frac{L}{N_t N_r} \cdot
{\bf f}^{H} {\sf H}^{H} {\sf H} {\bf f}
\nonumber \\
& {\hspace{0.15in}} =
\beta^2 |\alpha_1|^2 + (1 - \beta^2) |\alpha_2|^2 +
\left( \beta^2 |\alpha_2|^2 + (1 - \beta^2) |\alpha_1|^2 \right)
| {\bf v}_1^{H} {\bf v}_2|^2
\nonumber \\
& {\hspace{0.35in}}
+ 2 |\alpha_1| | \alpha_2 | 
\cdot | {\bf v}_1^{H} {\bf v}_2| \cdot | {\bf u}_1^{H} {\bf u}_2|
\cdot \cos \left( \nu
\right)
+ 2 \beta \sqrt{1 - \beta^2} \cdot \left( |\alpha_1|^2 + |\alpha_2|^2 \right)
\cdot | {\bf v}_1^{H} {\bf v}_2| \cdot \cos \left( \phi
\right) \nonumber \\
&  {\hspace{0.35in}}
+ 2 \beta \sqrt{1 - \beta^2} \cdot
 |\alpha_1 | | \alpha_2| \cdot | {\bf u}_1^{H} {\bf u}_2 |
\cdot \left[ | {\bf v}_1^{H} {\bf v}_2|^2 \cdot
\cos\left( \nu + \phi
\right) 
+ \cos \left( \nu - \phi
\right) \right]
\label{num_rx_snr}
\end{split}
\end{eqnarray}
where $\nu \triangleq \angle{ {\bf v}_1^{H} {\bf v}_2} - \angle{ {\bf u}_1^{H} {\bf u}_2}
+ \angle{ \alpha_1} - \angle{ \alpha_2 }$.
Observe that the phase term $\nu$ captures the {\em phase misalignment} between the
two paths since $| {\bf u}_i^H \hsppp {\sf H} \hsppp {\bf v}_j|$ is maximized for
all $\{ i,j\} \in 1,2$ when $\nu = 0$ (coherent phase alignment).

We now consider many special cases to study the performance of the beamforming scheme.
For this, we define the normalized received ${\sf SNR}$ (denoted
as $\widetilde{ {\sf SNR}}_{\sf rx}$):
\begin{eqnarray}
\widetilde{ {\sf SNR}}_{\sf rx} \triangleq
\frac{ {\sf SNR}_{\sf rx}} {N_t N_r \cdot \rho_{\sf prebf}} =
\frac{1}{N_t N_r} \cdot
\frac{ {\bf f}^{H} {\sf H}^{H} {\sf H} {\bf f} }{ {\bf f}^H {\bf f}}.
\nonumber
\end{eqnarray}

We start with a physical interpretation for the inner product between ${\bf u}_1$
and ${\bf u}_2$ (a similar interpretation holds for ${\bf v}_1^H {\bf v}_2$),
corresponding to CPO beams in two directions/paths. With the assumption
for ${\bf u}_{\ell}$ in~(\ref{eq_ul}), we have
\begin{eqnarray}
{\bf u}_1^H {\bf u}_2 = \exp\left( \frac{ j \pi(N_r - 1) \cdot \Delta \cos(\phi_{\sf R})} {2}
\right) \cdot \frac{ \sin \left( N_r \hsppp \pi \hsppp \Delta \cos(\phi_{\sf R}) /2 \right)}
{N_r \sin \left( \pi \hsppp \Delta \cos(\phi_{\sf R}) /2 \right)}
\label{eq_u1Hu2}
\end{eqnarray}
where $\Delta \cos(\phi_{\sf R}) \triangleq \cos(\phi_{ {\sf R},2}) -  \cos ( \phi_{ {\sf R},1})$.
Clearly, the maximum magnitude of 
${\bf u}_1^H {\bf u}_2$ is $1$ which is achieved when $\Delta \cos(\phi_{\sf R}) = 0$
(or when the two paths can be coherently combined
in the physical angle space). Further, a minimum magnitude of $0$ is achieved
in~(\ref{eq_u1Hu2}) when $\Delta \cos(\phi_{\sf R}) = \frac{2n}{N_r}, \hsppp n \in
{\mathbb{Z}} \backslash \{ 0 \}$. We denote this condition as {\em electrical orthogonality} between the
two paths, which is achievable with higher regularity in the physical angle space as $N_r$
increases.

\subsection{Beamforming along the dominant path}
We start with a scheme where the entire power is directed along {\em only} one
path (the dominant one): either ${\bf v}_1$ ($\beta = 1$) or ${\bf v}_2$ ($\beta = 0$).
Note that this scheme is amenable to analog (RF) beamforming as it can be implemented
with analog phase shifters alone. As a result, this scheme is of low-complexity
and is advantageous in mmW MIMO systems. In contrast, ${\bf f}_{\sf opt}$ requires
digital beamforming (in general) --- a higher complexity implementation --- as
it requires both phase shifters and gain control stages. Thus, it is important to
characterize the performance achievable with beamforming along the dominant path
in benchmarking the performance of the optimal scheme.

It is straightforward to see that 
this scheme results in the following received ${\sf SNR}$:
\begin{eqnarray}
\widetilde{ {\sf SNR}}_{\sf rx} & = & \frac{1}{L} \cdot
\max \Big( |\alpha_1|^2 + |\alpha_2|^2 |{\bf v}_1^{H} {\bf v}_2|^2,
{\hspace{0.03in}}
|\alpha_2|^2 + |\alpha_1|^2 |{\bf v}_1^{H} {\bf v}_2|^2 \Big)
\nonumber \\
& & {\hspace{0.05in}}
+ \frac{1}{L} \cdot
2 |\alpha_1| |\alpha_2| \cdot |{\bf v}_1^{H} {\bf v}_2| \cdot
|{\bf u}_1^{H} {\bf u}_2| \cdot
\cos \left(
\nu \right).
\nonumber
\end{eqnarray}
With $\max( |\alpha_1|^2, \hsppp |\alpha_2|^2) = K^2$ and
$\min( |\alpha_1|^2, \hsppp |\alpha_2|^2) = 1$ where $K \geq 1$, we have
\begin{eqnarray}
\widetilde{ {\sf SNR}}_{\sf rx} & = & \frac{1}{L} \cdot
\left( K^2 + |{\bf v}_1^{H} {\bf v}_2|^2 + 2 K \cdot
|{\bf v}_1^{H} {\bf v}_2| \cdot |{\bf u}_1^{H} {\bf u}_2| \cdot
\cos \left(
\nu \right) \right)
\nonumber \\
& \stackrel{(a)}{\leq} & \frac{1}{L} \cdot
\left( K^2 + |{\bf v}_1^{H} {\bf v}_2|^2 + 2 K \cdot
|{\bf v}_1^{H} {\bf v}_2| \cdot |{\bf u}_1^{H} {\bf u}_2| \right)
\nonumber
\end{eqnarray}
with equality achieved in (a) in the most optimistic scenario of coherent phase alignment
($\nu = 0$). Clearly, the upper bound is increasing in $K$, $|{\bf v}_1^{H} {\bf v}_2|$
and $|{\bf u}_1^{H} {\bf u}_2|$. Under favorable channel conditions
$( \{ |{\bf v}_1^{H} {\bf v}_2|, \hsppp |{\bf u}_1^{H} {\bf u}_2| \} \approx 1)$,
beamforming along a single path can yield $\frac{ (K + 1)^2 }{L}$,
corresponding to a case where the two paths coherently add at the receiver to increase
the signal amplitude. 
When only one of the paths is strong $(K \gg 1)$ or when ${\bf v}_1$ and ${\bf v}_2$
are electrically orthogonal, this coherent gain is lost and the beamforming gain is
$\frac{K^2}{L}$.

\subsection{${\bf v}_1$ and ${\bf v}_2$ are orthogonal}
\label{sec_appB4}
\begin{prop}
\label{prop_v1v2orth}
When ${\bf v}_1$ and ${\bf v}_2$ are electrically orthogonal, 
the non-unit-norm version of ${\bf f}_{\sf RSV}$ is given as
\begin{eqnarray}
\boxed{
{\bf f}_{\sf RSV} = \beta_{\sf opt} {\bf v}_1 +
e^{j \left(\angle{\alpha_1} - \angle{\alpha_2} - \angle{ {\bf u}_1^H {\bf u}_2 } \right) }
\sqrt{1 - \beta_{\sf opt}^2} {\bf v}_2 }
\nonumber
\end{eqnarray}
where
\begin{eqnarray}
\beta_{\sf opt}^2  = 
\frac{1}{ 2} \cdot \left[ 1 +
\frac{ |\alpha_1|^2 - |\alpha_2|^2}
{ \sqrt{  \left( |\alpha_1|^2 - |\alpha_2|^2 \right)^2 + 4
|\alpha_1|^2 |\alpha_2|^2 \cdot |{\bf u}_1^H {\bf u}_2|^2 }}
\right] .
\label{eq_betaopt}
\end{eqnarray}
The non-unit-norm version of ${\bf g}_{\sf opt}$ satisfies
\begin{eqnarray}
{\bf g}_{\sf opt}  = \alpha_1 \beta_{\sf opt} \cdot {\bf u}_1 +
e^{j \left(\angle{\alpha_1} - \angle{\alpha_2} - \angle{ {\bf u}_1^H {\bf u}_2 } \right) }
\alpha_2 \sqrt{1 - \beta_{\sf opt}^2}  \cdot {\bf u}_2.
\end{eqnarray}
\end{prop}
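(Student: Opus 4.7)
The plan is to exploit the parameterization ${\bf f} = \beta\,{\bf v}_1 + e^{j\theta}\sqrt{1-\beta^2}\,{\bf v}_2$ of the class ${\cal G}$ in the case $L=2$ introduced just above the proposition. Under the hypothesis ${\bf v}_1^H {\bf v}_2 = 0$, this parameterization is isometric: from~(\ref{den_rx_snr}) one sees that ${\bf f}^H{\bf f} = 1$ identically in $(\beta,\theta)$. Consequently the Rayleigh quotient in~(\ref{eq_eqviv_optimization}) collapses to an unconstrained maximization of the numerator alone over $\beta \in [0,1]$ and $\theta \in [0,2\pi)$, which is what I will carry out.

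First, I would compute ${\sf H}{\bf f}$ directly from the channel model~(\ref{eq_chan_model}). The orthogonality ${\bf v}_1^H {\bf v}_2 = 0$ immediately decouples the two paths at the transmit side and gives
\[
{\sf H}\,{\bf f} \;=\; \sqrt{\tfrac{N_r N_t}{L}}\;\bigl[\alpha_1\,\beta\,{\bf u}_1 \;+\; \alpha_2\, e^{j\theta}\sqrt{1-\beta^2}\,{\bf u}_2\bigr],
\]
so that the squared norm takes the clean form
\[
\tfrac{L}{N_r N_t}\,{\bf f}^H {\sf H}^H{\sf H}{\bf f} \;=\; |\alpha_1|^2\beta^2 \;+\; |\alpha_2|^2(1-\beta^2) \;+\; 2\beta\sqrt{1-\beta^2}\,\mathrm{Re}\!\left[\alpha_1^{\star}\alpha_2\, e^{j\theta}\,({\bf u}_1^H {\bf u}_2)\right].
\]
The same expression can be read off from~(\ref{num_rx_snr}) by substituting $|{\bf v}_1^H{\bf v}_2|=0$: every cross term except the one carrying $\cos(\nu-\phi)$ collapses.

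The phase $\theta$ enters only through the single real part above, whose modulus is $|\alpha_1||\alpha_2||{\bf u}_1^H {\bf u}_2|$; maximizing over $\theta$ thus aligns the enclosed complex number with the positive real axis and forces $\theta_{\sf opt} = \angle\alpha_1 - \angle\alpha_2 - \angle({\bf u}_1^H {\bf u}_2)$, which is precisely the phase recorded in the statement. After this substitution the remaining problem is the scalar optimization
\[
\max_{\beta \in [0,1]}\;\bigl[|\alpha_1|^2\beta^2 \;+\; |\alpha_2|^2(1-\beta^2) \;+\; 2\beta\sqrt{1-\beta^2}\,|\alpha_1||\alpha_2||{\bf u}_1^H {\bf u}_2|\bigr].
\]
Setting $\beta = \sin\psi$ for $\psi\in[0,\pi/2]$ and invoking double-angle identities recasts the objective as $\tfrac{1}{2}(|\alpha_1|^2+|\alpha_2|^2) + A\cos 2\psi + B\sin 2\psi$ with $A = \tfrac{1}{2}(|\alpha_2|^2 - |\alpha_1|^2)$ and $B = |\alpha_1||\alpha_2||{\bf u}_1^H {\bf u}_2|$; the maximum is attained at $\cos 2\psi = A/\sqrt{A^2+B^2}$, and $\beta^2 = \tfrac{1}{2}(1-\cos 2\psi)$ reproduces~(\ref{eq_betaopt}) after clearing a factor of $2$ under the radicals.

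The expression for ${\bf g}_{\sf opt}$ follows with no additional effort from the matched-filter identity~(\ref{eq_rsv_mf}): one simply substitutes $(\beta,\theta)=(\beta_{\sf opt},\theta_{\sf opt})$ into the closed form for ${\sf H}{\bf f}$ displayed above, dropping the irrelevant scalar $\sqrt{N_r N_t/L}$ (normalization is suppressed in the statement). The whole argument is essentially mechanical; the only spot requiring any real care is the last piece of bookkeeping, namely verifying the algebraic identity $-A/\sqrt{A^2+B^2} = (|\alpha_1|^2-|\alpha_2|^2)/\sqrt{(|\alpha_1|^2-|\alpha_2|^2)^2+4|\alpha_1|^2|\alpha_2|^2|{\bf u}_1^H{\bf u}_2|^2}$ that converts the natural half-angle expression into the form displayed in~(\ref{eq_betaopt}), together with keeping the sign of $A$ straight so that $\beta_{\sf opt}$ lands in $[0,1]$.
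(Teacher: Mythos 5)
Your proposal is correct and follows essentially the same route as the paper's proof: work within the $\mathcal{G}$ parameterization ${\bf f}=\beta\,{\bf v}_1+e^{j\theta}\sqrt{1-\beta^2}\,{\bf v}_2$, use ${\bf v}_1^H{\bf v}_2=0$ to make ${\bf f}^H{\bf f}=1$ and reduce to maximizing the numerator, align the phase to obtain $\theta_{\sf opt}=\angle\alpha_1-\angle\alpha_2-\angle({\bf u}_1^H{\bf u}_2)$, solve the remaining scalar problem in $\beta$, and then read off ${\bf g}_{\sf opt}$ by matched filtering exactly as the paper does via~(\ref{eq43}). The only (minor, harmless) difference is in the last step: the paper gets $\beta_{\sf opt}$ from the stationarity condition, a quadratic in $\beta^2$, and asserts its solution is~(\ref{eq_betaopt}), whereas your half-angle substitution reaches the same closed form directly and explicitly identifies the maximizing root.
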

\begin{proof}
See Appendix~\ref{app_prop_v1v2orth}.
\end{proof}

\begin{figure*}[htb!]
\begin{center}
\begin{tabular}{cc}
\includegraphics[height=2.5in,width=3.0in] {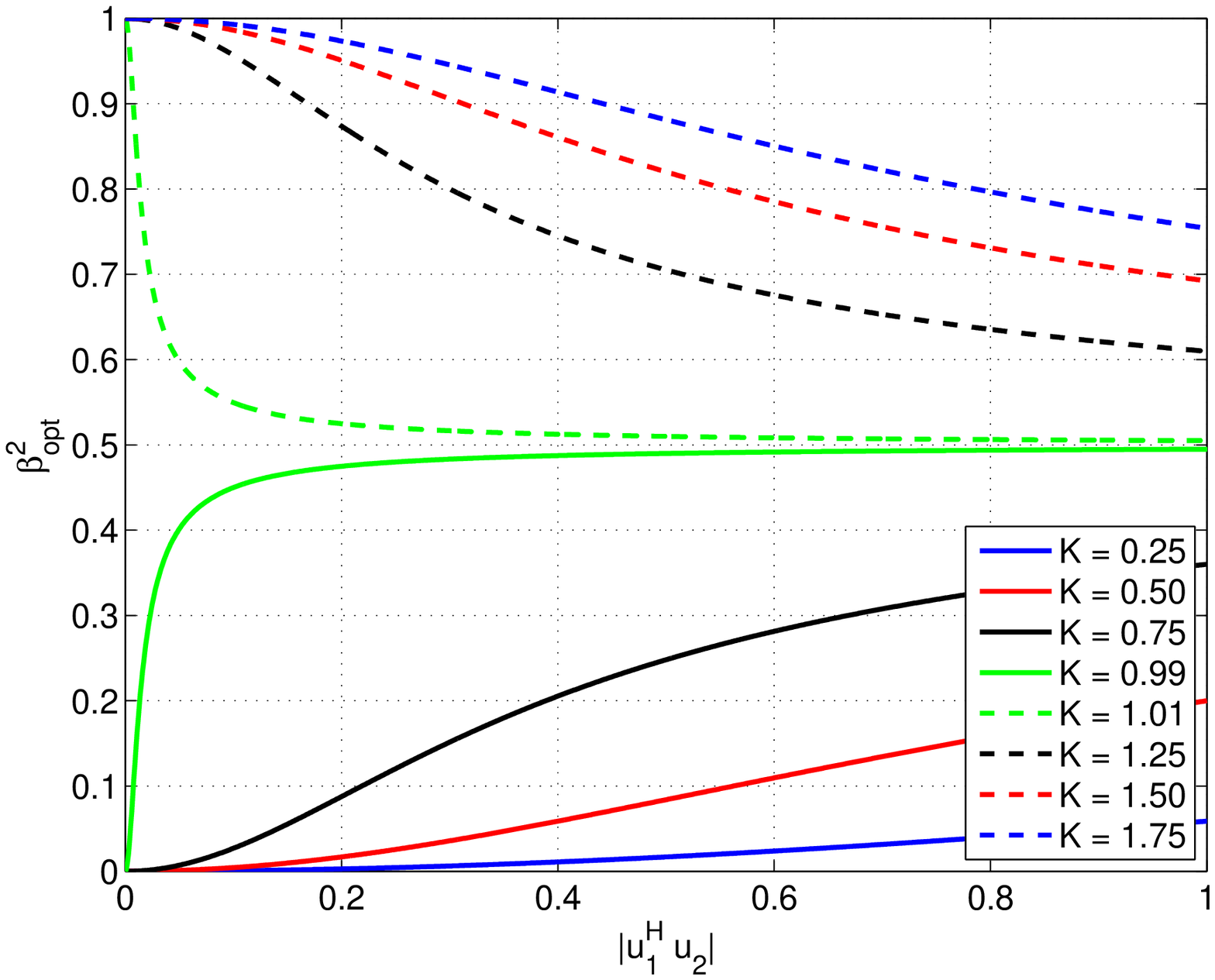}
&
\includegraphics[height=2.5in,width=3.0in] {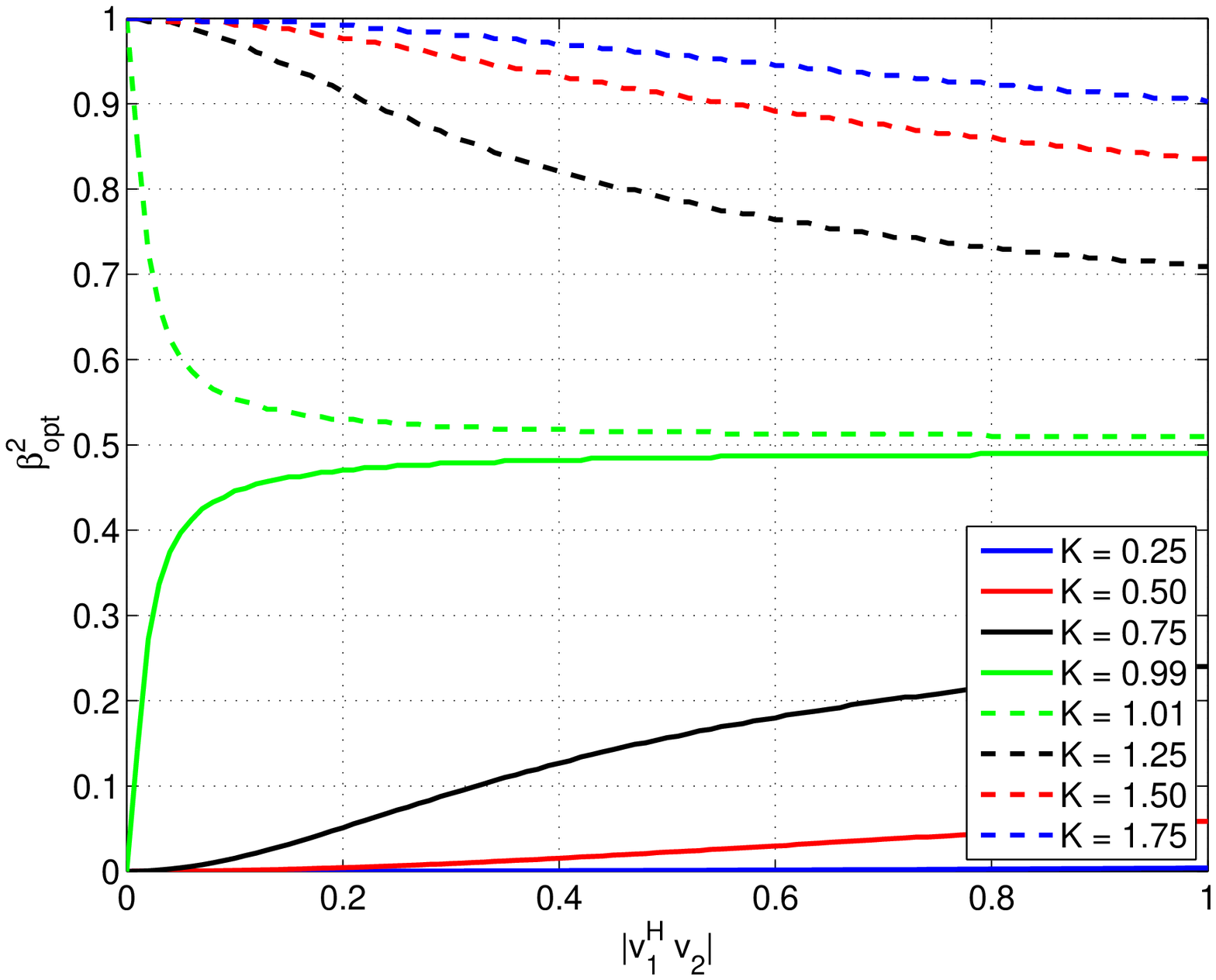}
\\
(a) & (b)
\end{tabular}
\caption{\label{fig1a} $\beta_{\sf opt}^2$ as a function of
$K = \frac{ |\alpha_1| } { |\alpha_2| }$ for different choices of:
a) ${\bf u}_1$ and ${\bf u}_2$ when ${\bf v}_1$ and ${\bf v}_2$ are orthogonal,
and b) ${\bf v}_1$ and ${\bf v}_2$ when ${\bf u}_1$ and ${\bf u}_2$ are
orthogonal.}
\end{center}
\end{figure*}

While the structure of $\beta_{\sf opt}^2$ is hard to visualize in general,
Fig.~\ref{fig1a}(a) plots it as a function of $|{\bf u}_1^{H} {\bf u}_2|$ for different
choices of $K =  \frac{ |\alpha_1| } { | \alpha_2 | }$. From Fig.~\ref{fig1a}(a), if ${\bf u}_1$
and ${\bf u}_2$ are orthogonal, we see that $\beta_{\sf opt}$ is either $1$ or $0$ with full
power allocated to the strongest path. In addition, a straightforward calculation shows that
\begin{eqnarray}
|{\bf u}_1^H {\bf u}_2| \rightarrow 1 \Longrightarrow \beta_{\sf opt}^2 \rightarrow \frac{ |\alpha_1|^2}
{ |\alpha_1|^2 + |\alpha_2|^2}.
\nonumber
\end{eqnarray}
In terms of loss with respect to beamforming along the dominant path, a simple calculation
shows that
\begin{eqnarray}
\Delta {\sf SNR} \triangleq
\frac{ \widetilde{\sf SNR}_{\sf rx} \Big|_{\sf RSV} }
{ \widetilde{\sf SNR}_{\sf rx} \Big|_{\sf Dom. \hsppp path} } =
\frac{ |\alpha_1|^2 + |\alpha_2|^2 +
\sqrt{ |\alpha_1|^4 + |\alpha_2|^4 + 2 |\alpha_1|^2 |\alpha_2|^2 \cdot
\left(2 |{\bf u}_1^H {\bf u}_2|^2 - 1 \right) }}
{ 2 \cdot \max \left( |\alpha_1|^2, \hsppp |\alpha_2|^2 \right) }.
\label{eq_DeltaSNR_v1v2orth}
\end{eqnarray}
Clearly, $\Delta {\sf SNR}$ is increasing in $|{\bf u}_1^H {\bf u}_2|$ with
\begin{eqnarray}
1 \leq \Delta {\sf SNR} \leq 1 + \frac{ \min \left( |\alpha_1|^2, \hsppp |\alpha_2|^2 \right) }
{ \max \left( |\alpha_1|^2, \hsppp |\alpha_2|^2 \right)}
\nonumber
\end{eqnarray}
where the lower bound is realized when ${\bf u}_1$ and ${\bf u}_2$ are orthogonal and
the upper bound is realized when they are parallel. The above relationship clearly
shows that the worst-case performance loss with beamforming along a single path is $3$ dB.
This SNR loss (in dB) is plotted in Fig.~\ref{fig2a}(a) as a function of $K =
\frac{ |\alpha_1|}{ |\alpha_2|}$ for different choices of $|{\bf u}_1^H {\bf u}_2|$.

\begin{figure*}[htb!]
\begin{center}
\begin{tabular}{cc}
\includegraphics[height=2.5in,width=3.0in] {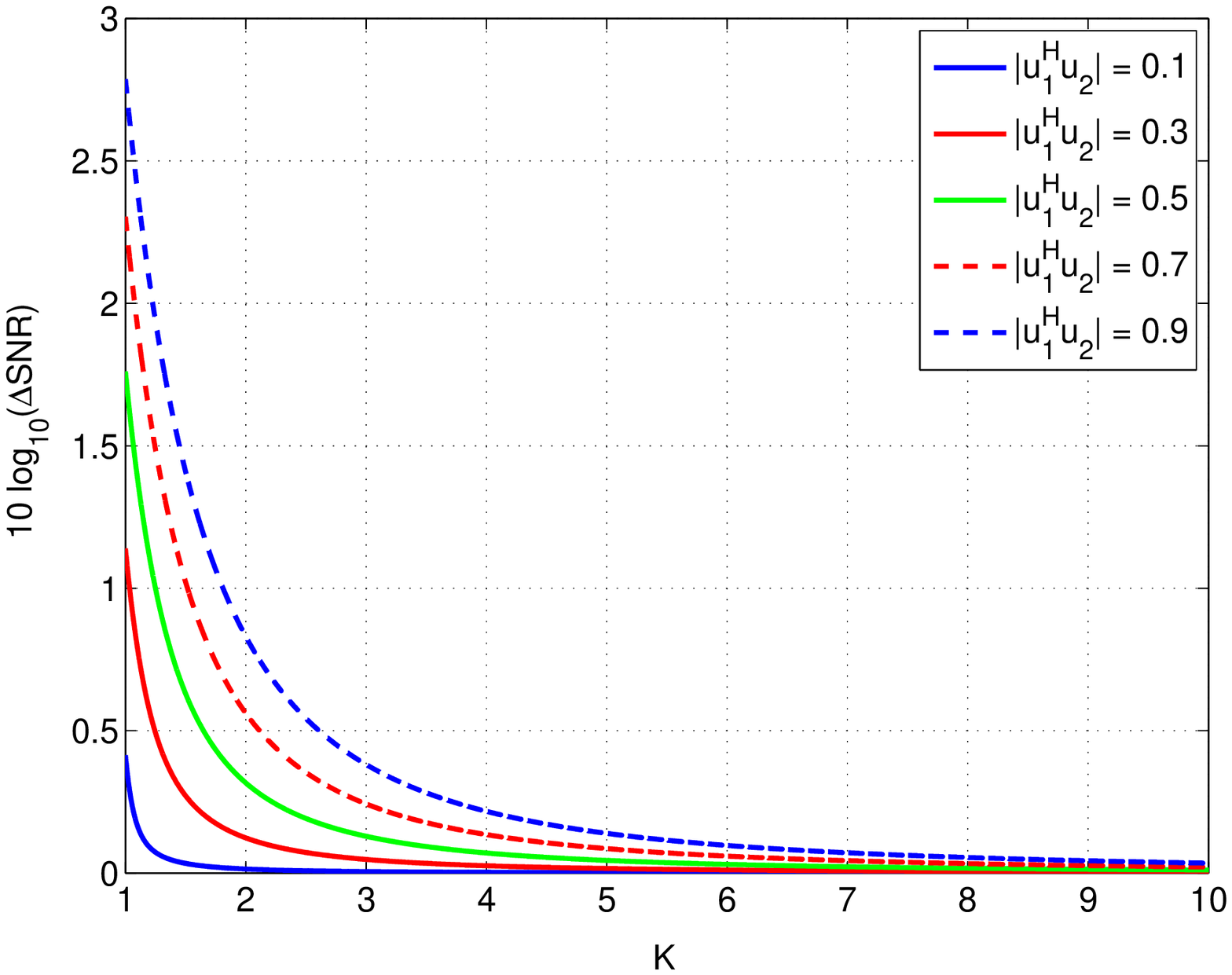}
&
\includegraphics[height=2.5in,width=3.0in] {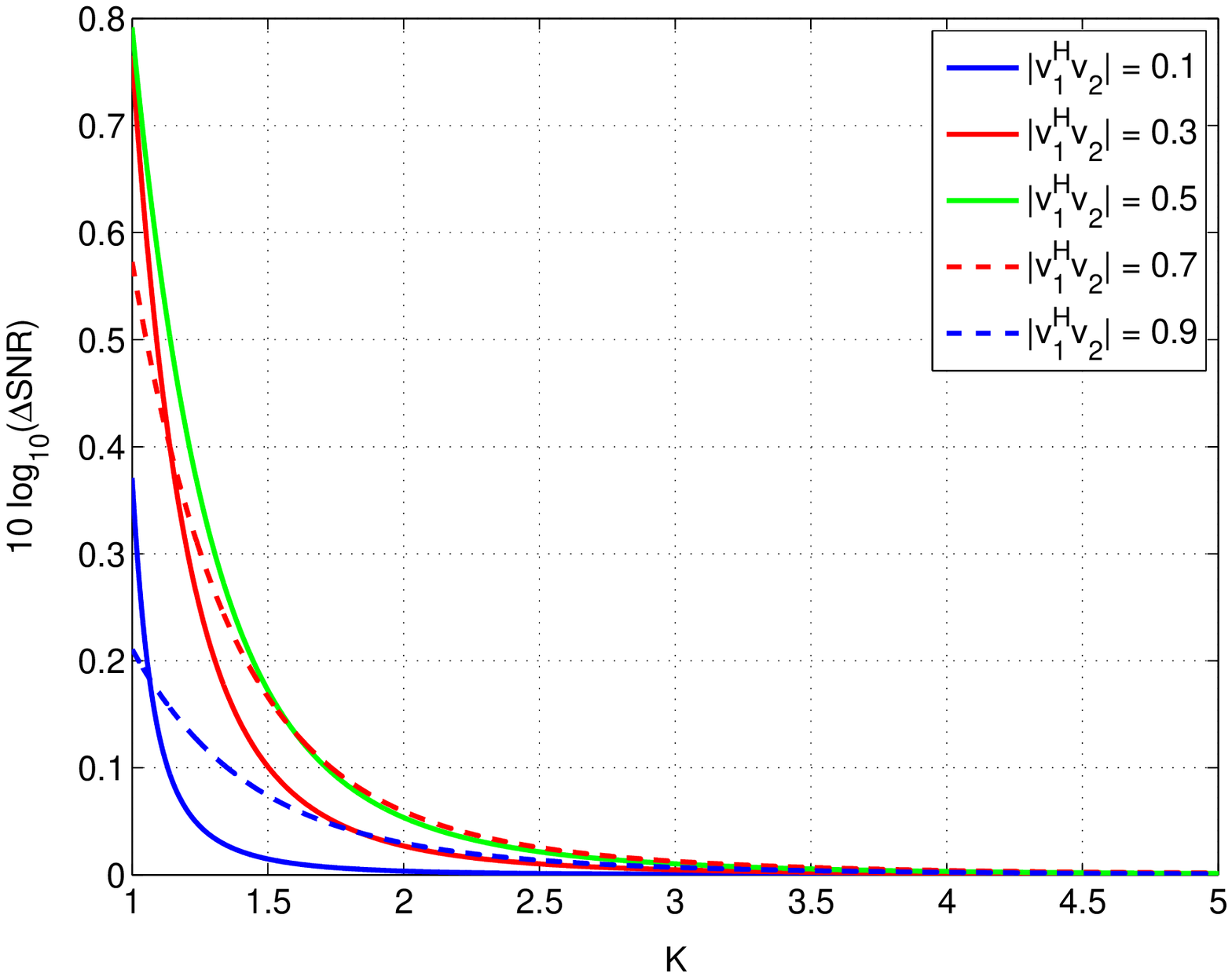}
\\
(a) & (b)
\end{tabular}
\caption{\label{fig2a} $\Delta{\sf SNR}$ between the optimal beamforming scheme and
beamforming along the strongest path as a function of $K$:
(a) as given by~(\ref{eq_DeltaSNR_v1v2orth}) when ${\bf v}_1$ and ${\bf v}_2$ are
orthogonal, and (b) as given by~(\ref{eq_DeltaSNR_u1u2orth}) when ${\bf u}_1$ and
${\bf u}_2$ are orthogonal.}
\end{center}
\end{figure*}

\subsection{${\bf u}_1$ and ${\bf u}_2$ are orthogonal}
\begin{prop}
\label{prop_u1u2orth}
If ${\bf u}_1$ and ${\bf u}_2$ are electrically orthogonal, 
the non-unit-norm version of ${\bf f}_{\sf RSV}$ is given as
\begin{eqnarray}
\boxed{
{\bf f}_{\sf RSV} = \beta_{\sf opt} {\bf v}_1 +
e^{ - j \angle{ {\bf v}_1^{H} {\bf v}_2}}
\sqrt{1 - \beta_{\sf opt}^2} {\bf v}_2 }
\nonumber
\end{eqnarray}
where
\begin{eqnarray}
\beta_{\sf opt}^2 & = & \left\{
\begin{array}{cc}
\frac{ {\cal A} + \sqrt{ {\cal B}} }{2 {\hspace{0.01in}} {\cal C}}
& {\rm if} {\hspace{0.1in}} |\alpha_1| \geq |\alpha_2| \\
\frac{ {\cal A} - \sqrt{ {\cal B}} }{2 {\hspace{0.01in}} {\cal C}}
& {\rm if} {\hspace{0.1in}} |\alpha_1| < |\alpha_2|
\end{array}
{\hspace{0.10in}} {\rm with} \right.
\nonumber \\
{\cal A} & = & \frac{ (|\alpha_1|^2 - |\alpha_2|^2)^2 }
{ |{\bf v}_1^{H} {\bf v}_2|^2 } + 2 |\alpha_1|^2 \cdot ( |\alpha_1|^2 + |\alpha_2|^2 )
\nonumber \\
{\cal B} & = & \frac{ (|\alpha_1|^2 - |\alpha_2|^2 )^4 }
{ |{\bf v}_1^{H} {\bf v}_2|^4 } +
\frac{ 4 |\alpha_1|^2 |\alpha_2|^2 }{ | {\bf v}_1^{H} {\bf v}_2|^2 }
\cdot \left( |\alpha_1|^2 - |\alpha_2|^2 \right)^2
\nonumber \\
{\cal C} & = & \left( 1 + \frac{1}{ |{\bf v}_1^{H} {\bf v}_2|^2} \right)
\cdot \left( |\alpha_1|^2 + |\alpha_2|^2  \right)^2 -
\frac{ 4 |\alpha_1|^2 |\alpha_2|^2 }{ |{\bf v}_1^{H} {\bf v}_2 |^2 }.
\nonumber
\end{eqnarray}
The non-unit-norm version of ${\bf g}_{\sf opt}$ satisfies
\begin{eqnarray}
{\bf g}_{\sf opt} =  \alpha_1 \left( \beta_{\sf opt} + | {\bf v}_1^H {\bf v}_2 |
 \sqrt{1 - \beta_{\sf opt}^2} \right) \cdot {\bf u}_1.
\end{eqnarray}
\end{prop}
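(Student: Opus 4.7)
The plan is to exploit the hypothesis ${\bf u}_1^H {\bf u}_2 = 0$ to collapse the cross terms in~(\ref{num_rx_snr}) and then maximize the resulting Rayleigh quotient over the two-dimensional subspace $\text{span}\{{\bf v}_1, {\bf v}_2\}$ guaranteed by Prop.~\ref{prop_evectors_HhermH}. Setting $|{\bf u}_1^H {\bf u}_2| = 0$ kills every term containing that factor in~(\ref{num_rx_snr}), leaving a numerator whose only dependence on $\phi$ is through $2 \beta \sqrt{1-\beta^2} (|\alpha_1|^2 + |\alpha_2|^2) |{\bf v}_1^H {\bf v}_2| \cos(\phi)$; the denominator~(\ref{den_rx_snr}) carries $\cos(\phi)$ with a coefficient of the same sign. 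A short monotonicity check on $\partial R / \partial(\cos\phi)$ shows that the Rayleigh quotient is strictly increasing in $\cos(\phi)$, so the optimum is at $\cos(\phi) = 1$, i.e.\ $\theta = -\angle{{\bf v}_1^H {\bf v}_2}$, which already establishes the phase structure of ${\bf f}_{\sf RSV}$ claimed in the boxed expression.

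With $\cos(\phi) = 1$ fixed, only $\beta \in [0,1]$ remains free. Differentiating the Rayleigh quotient $N/D$ in $\beta$, setting the result to zero, and simplifying (the factor $1 - |{\bf v}_1^H {\bf v}_2|^2$ can be pulled out of every surviving term) reduces the stationarity condition to
\[
\beta \sqrt{1-\beta^2} \cdot (|\alpha_1|^2 - |\alpha_2|^2) = |{\bf v}_1^H {\bf v}_2| \cdot \bigl[ (|\alpha_1|^2 + |\alpha_2|^2) \beta^2 - |\alpha_1|^2 \bigr].
\]
Squaring and collecting in $t = \beta^2$ produces the quadratic ${\cal C} t^2 - {\cal A} t + |\alpha_1|^4 = 0$, with ${\cal A}$ and ${\cal C}$ identical to those in the statement, so $\beta_{\sf opt}^2 = ({\cal A} \pm \sqrt{{\cal A}^2 - 4 |\alpha_1|^4 {\cal C}})/(2 {\cal C})$; an elementary expansion confirms ${\cal A}^2 - 4 |\alpha_1|^4 {\cal C} = {\cal B}$.

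Root selection is the only delicate step, since the two roots correspond to the dominant and sub-dominant eigenvectors of the reduced $2 \times 2$ problem. The cleanest way to pick the right branch is to return to the pre-squaring identity: its left-hand side is non-negative for $\beta \in [0,1]$, so the right-hand side must be too, which forces $(|\alpha_1|^2 + |\alpha_2|^2) \beta^2 - |\alpha_1|^2$ to share the sign of $|\alpha_1|^2 - |\alpha_2|^2$. This selects the $+$ branch when $|\alpha_1| \ge |\alpha_2|$ and the $-$ branch otherwise, matching the statement. As a sanity check, in the limit $|{\bf v}_1^H {\bf v}_2| \to 0$ (so that the ${\bf v}_\ell$ become orthonormal and all the energy ought to concentrate on the stronger path) the $+$ root yields $\beta^2 \to 1$ while the $-$ root yields $\beta^2 \to 0$, as expected.

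Finally, ${\bf g}_{\sf opt}$ follows from the matched-filter identity ${\bf g}_{\sf opt} \propto {\sf H} {\bf f}_{\sf opt}$: substituting~(\ref{eq_chan_model}) gives ${\bf g}_{\sf opt} \propto \alpha_1 \cdot ({\bf v}_1^H {\bf f}_{\sf opt}) \cdot {\bf u}_1 + \alpha_2 \cdot ({\bf v}_2^H {\bf f}_{\sf opt}) \cdot {\bf u}_2$, and evaluating ${\bf v}_\ell^H {\bf f}_{\sf opt}$ with the explicit form of ${\bf f}_{\sf opt}$ just derived (combining $e^{-j \angle{{\bf v}_1^H {\bf v}_2}}$ with ${\bf v}_1^H {\bf v}_2 = |{\bf v}_1^H {\bf v}_2| e^{j \angle{{\bf v}_1^H {\bf v}_2}}$) yields the real coefficient $\beta_{\sf opt} + |{\bf v}_1^H {\bf v}_2| \sqrt{1-\beta_{\sf opt}^2}$ multiplying $\alpha_1 {\bf u}_1$ in the stated expression. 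The main obstacle throughout is algebraic bookkeeping: carrying out the cancellations that reduce the stationarity condition to a clean quadratic, and then verifying ${\cal A}^2 - 4 |\alpha_1|^4 {\cal C} = {\cal B}$, both of which require care but no new ideas.
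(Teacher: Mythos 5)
Your proposal is correct and follows essentially the same route as the paper: kill the cross terms using ${\bf u}_1^H{\bf u}_2=0$, argue the quotient is maximized at $\cos\phi=1$ (your monotonicity check is equivalent to the paper's observation that $A\le 1$), reduce to a one-dimensional problem in $\beta$, arrive at the same stationarity identity $\beta\sqrt{1-\beta^2}\,(|\alpha_1|^2-|\alpha_2|^2)=|{\bf v}_1^H{\bf v}_2|\,[(|\alpha_1|^2+|\alpha_2|^2)\beta^2-|\alpha_1|^2]$, and obtain ${\bf g}_{\sf opt}$ by substituting ${\bf f}_{\sf RSV}$ into the expansion of ${\sf H}{\bf f}$, exactly as in~(\ref{eq43}). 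Your explicit quadratic ${\cal C}t^2-{\cal A}t+|\alpha_1|^4=0$, the verification ${\cal A}^2-4|\alpha_1|^4{\cal C}={\cal B}$, and the sign-based branch selection simply spell out algebra the paper leaves as ``it can be seen,'' so the two arguments are the same in substance (just phrase the sign step as ``the left side has the sign of $|\alpha_1|^2-|\alpha_2|^2$'' rather than ``is non-negative'').
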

\begin{proof}
See Appendix~\ref{app_prop_u1u2orth}.
\end{proof}

Fig.~\ref{fig1a}(b) plots $\beta_{\sf opt}^2$ as a function of $|{\bf v}_1^{H} {\bf v}_2|$
for different choices of $K =  \frac{ |\alpha_1| } { | \alpha_2 | }$. As before,
Fig.~\ref{fig1a}(b) shows that $\beta_{\sf opt}$ converges to $0$ or $1$ as ${\bf v}_1$
and ${\bf v}_2$ become more orthogonal. A straightforward calculation also shows that 
\begin{eqnarray}
|{\bf v}_1^{H} {\bf v}_2| \rightarrow 1 \Longrightarrow
\beta_{\sf opt}^2 \rightarrow \frac{ |\alpha_1|^4 }
{ |\alpha_1|^4 + |\alpha_2|^4 }.  \nonumber
\end{eqnarray}
In between these two extremes, we have
\begin{eqnarray}
\frac{1}{L} \cdot \max \left( |\alpha_1|^2, {\hspace{0.02in}} | \alpha_2|^2 \right)
\leq \widetilde{ {\sf SNR} }_{\sf rx} \leq \frac{ |\alpha_1|^2 + |\alpha_2|^2}{L}.
\nonumber
\end{eqnarray}

In terms of loss with respect to beamforming along the dominant path, a simple
calculation shows that
\begin{eqnarray}
\Delta {\sf SNR} = \frac{ \widetilde{\sf SNR}_{\sf rx} \Big|_{\sf RSV} }
{ \widetilde{\sf SNR}_{\sf rx} \Big|_{\sf Dom. \hsppp path} } =
\frac{  |\alpha_1|^2 + |\alpha_2|^2
- \left( 1 - |{\bf v}_1^{H} {\bf v}_2|^2  \right) \cdot
\left( \frac{ \beta_{\sf opt}^2 \cdot |\alpha_2|^2 + (1 - \beta_{\sf opt}^2) \cdot |\alpha_1|^2 }
{  1 + 2 \beta_{\sf opt} \sqrt{1 - \beta_{\sf opt}^2} \cdot |{\bf v}_1^{H} {\bf v}_2|  }
\right)
}{
\max\left( |\alpha_1|^2 + |\alpha_2|^2 |{\bf v}_1^H {\bf v}_2|^2, \hspp
|\alpha_2|^2 + |\alpha_1|^2 |{\bf v}_1^H {\bf v}_2|^2 \right) }
\label{eq_DeltaSNR_u1u2orth}
\end{eqnarray}
where $\beta_{\sf opt}$ is as in the statement of the proposition. 
While this expression is also hard to visualize, Fig.~\ref{fig2a}(b) plots it as a function
of $K = \frac{ |\alpha_1|} { |\alpha_2|} $ for different values of
$|{\bf v}_1^H{\bf v}_2|$. 
With $K = \frac{ |\alpha_1|} { |\alpha_2|} \geq 1$, note that $\Delta {\sf SNR}$ can be rewritten as
\begin{eqnarray}
\Delta {\sf SNR} = 1 + \left[ \frac{ \left( 1 - |{\bf v}_1^H {\bf v}_2|^2 \right) \cdot
\sqrt{1 - \beta_{\sf opt}^2} }
{ 1  + 2 \beta_{\sf opt} \sqrt{1 - \beta_{\sf opt}^2 } \cdot |{\bf v}_1^H {\bf v}_2|} \right]
\cdot
\left[ \frac{ \sqrt{1 - \beta_{\sf opt}^2} \cdot (1 - K^2) +
2 \beta_{\sf opt} | {\bf v}_1^H {\bf v}_2| }
{ K^2 + |{\bf v}_1^H {\bf v}_2|^2 } \right].
\nonumber
\end{eqnarray}
While optimizing the above expression in terms of $K$ is difficult given the complicated functional
involvement of $K$ in the above expression, by treating $\beta_{\sf opt}$ as a fixed quantity,
it is straightforward to see that the above expression is decreasing in $K$. Without being rigorous,
this argument suggests that the above expression is maximized at $K = 1$. Substituting $K = 1$, we
have $\beta_{\sf opt}^2 = \frac{1}{2}$ and
\begin{eqnarray}
\Delta {\sf SNR} = \frac{ 1 + |{\bf v}_1^H {\bf v}_2|} { 1 + |{\bf v}_1^H {\bf v}_2|^2 }
. \nonumber
\end{eqnarray}
It is easy to see that the above expression is maximized at $|{\bf v}_1^H {\bf v}_2|
= \sqrt{2} - 1$ with a maximum value of $\Delta {\sf SNR} = \frac{\sqrt{2} + 1 }{2} =
0.8175$ dB. Thus, beamforming along the dominant path is no worser than $0.8175$ dB in
terms of optimal beamforming performance. This trend is reinforced by the $\Delta {\sf SNR}$
plot in Fig.~\ref{fig2a}(b) as a function of $K = \frac{ |\alpha_1| } { |\alpha_2| }$ for
different values of $|{\bf v}_1^H {\bf v}_2|$.

\subsection{${\bf v}_1$ and ${\bf v}_2$ are parallel}
If ${\bf v}_1$ and ${\bf v}_2$ are parallel (or nearly parallel), we can use
$|{\bf v}_1^H {\bf v}_2| \approx 1$ to rewrite ${\widetilde{\sf SNR}}_{\sf rx}$ as
\begin{eqnarray}
\widetilde{\sf SNR}_{\sf rx} & = &
\frac{
\begin{split}
& |\alpha_1|^2 + |\alpha_2|^2
+ 2 |\alpha_1| | \alpha_2 | \cdot | {\bf u}_1^{H} {\bf u}_2| \cdot \cos \left( \nu \right)
+ 2 \beta \sqrt{1 - \beta^2} \cdot \cos(\phi) \cdot
\nonumber \\
& {\hspace{1.0in}}
\left( |\alpha_1|^2 + |\alpha_2|^2 + 2 |\alpha_1| |\alpha_2| \cdot | {\bf u}_1^{H} {\bf u}_2 |
\cdot \cos(\nu) \right)
\end{split}
}
{L \cdot \left( 1 + 2 \beta \sqrt{1 - \beta^2}  \cdot \cos( \phi) \right) }
\nonumber \\
& = & \frac{1}{L} \cdot
\left(
|\alpha_1|^2 + |\alpha_2|^2 + 2 |\alpha_1| |\alpha_2| \cdot | {\bf u}_1^{H} {\bf u}_2 | \cdot \cos(\nu)
\right).
\nonumber
\end{eqnarray}
Clearly, the above objective function is independent of $\beta$ and $\theta$. Therefore,
any power allocation scheme across the two paths achieves the above gain. A corollary of
this observation is that beamforming along the dominant path is as good as the optimal
beamforming scheme ($\Delta {\sf SNR} = 1$).

\subsection{${\bf u}_1$ and ${\bf u}_2$ are parallel}
\begin{prop}
\label{prop_u1u2par}
If ${\bf u}_1$ and ${\bf u}_2$ are parallel, 
the non-unit-norm version of ${\bf f}_{\sf RSV}$ is given as
\begin{eqnarray}
\boxed{
{\bf f}_{\sf RSV} = 
\beta_{\sf opt} {\bf v}_1 +
e^{j \left( \angle{ \alpha_1} - \angle{ \alpha_2 } - \angle{ {\bf u}_1^{H} {\bf u}_2} \right) }
\sqrt{1 - \beta_{\sf opt}^2} {\bf v}_2
} 
\nonumber
\end{eqnarray}
where
\begin{eqnarray}
\beta_{\sf opt}^2  = \frac{|\alpha_1|^2}{|\alpha_1|^2 + |\alpha_2|^2} .
\nonumber
\end{eqnarray}
The non-unit-norm version of ${\bf g}_{\sf opt}$ follows from expanding out
${\sf H} \hsppp {\bf f}_{\sf opt}$ and is not provided here.
\end{prop}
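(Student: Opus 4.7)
The plan is to exploit the crucial simplification that when $\mathbf{u}_1$ and $\mathbf{u}_2$ are parallel, the channel $\mathbf{H}$ collapses to a rank-one matrix, so its SVD can be written down by inspection without solving any eigenvalue problem. This makes this case the easiest of the four special cases in this section.

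First, I would write $\mathbf{u}_1$ and $\mathbf{u}_2$ parallel (i.e., $|\mathbf{u}_1^H \mathbf{u}_2| = 1$) as $\mathbf{u}_2 = e^{j\psi}\, \mathbf{u}_1$, where $\psi = \angle \mathbf{u}_1^H \mathbf{u}_2$. Substituting into the model~(\ref{eq_chan_model}) for $L=2$ yields
\begin{equation}
\mathbf{H} = \sqrt{N_r N_t / L}\, \cdot \mathbf{u}_1 \left( \alpha_1 \mathbf{v}_1^H + \alpha_2 e^{j\psi} \mathbf{v}_2^H \right) = \sqrt{N_r N_t / L}\, \cdot \mathbf{u}_1 \mathbf{w}^H, \nonumber
\end{equation}
where $\mathbf{w} \triangleq \alpha_1^* \mathbf{v}_1 + \alpha_2^* e^{-j\psi} \mathbf{v}_2$. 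Since this is a rank-one matrix, its dominant right singular vector must be proportional to $\mathbf{w}$, and its dominant left singular vector proportional to $\mathbf{u}_1$. This immediately gives $\mathbf{g}_{\sf opt} \propto \mathbf{u}_1$ as claimed (up to a scalar obtained by expanding $\mathbf{H}\mathbf{f}_{\sf opt}$).

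Next I would recast $\mathbf{w}$ in the form required by the proposition. Writing $\alpha_k = |\alpha_k| e^{j\angle \alpha_k}$ and factoring out the global phase $e^{-j\angle \alpha_1}$ gives
\begin{equation}
\mathbf{w} = e^{-j \angle \alpha_1} \left( |\alpha_1|\, \mathbf{v}_1 + |\alpha_2|\, e^{j(\angle\alpha_1 - \angle\alpha_2 - \psi)}\, \mathbf{v}_2 \right). \nonumber
\end{equation}
Discarding the irrelevant global phase (which does not affect $|\mathbf{g}^H \mathbf{H} \mathbf{f}|$) and then rescaling the coefficients by $1/\sqrt{|\alpha_1|^2 + |\alpha_2|^2}$ so that the squared moduli of the basis coefficients in $\{\mathbf{v}_1, \mathbf{v}_2\}$ sum to one (the convention used in the statement of the other propositions for the ``non-unit-norm'' form), one reads off the coefficient of $\mathbf{v}_1$ as $|\alpha_1|/\sqrt{|\alpha_1|^2 + |\alpha_2|^2}$, giving the asserted $\beta_{\sf opt}^2 = |\alpha_1|^2/(|\alpha_1|^2 + |\alpha_2|^2)$, with the remaining coefficient of $\mathbf{v}_2$ being $\sqrt{1 - \beta_{\sf opt}^2}$ multiplied by the phase factor $e^{j(\angle \alpha_1 - \angle \alpha_2 - \angle \mathbf{u}_1^H \mathbf{u}_2)}$.

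There is essentially no hard obstacle here: the rank-one structure bypasses the quadratic optimization needed in Propositions~\ref{prop_v1v2orth} and~\ref{prop_u1u2orth}, and all that remains is careful phase bookkeeping to line up the factored form with the boxed expression. As a sanity check, one can verify that the resulting $\beta_{\sf opt}^2$ agrees with the limit $|\mathbf{v}_1^H \mathbf{v}_2| \to 1$ of the formula in Proposition~\ref{prop_v1v2orth} (with the roles of $\mathbf{u}$'s and $\mathbf{v}$'s swapped by symmetry), which reduces there to $|\alpha_1|^2/(|\alpha_1|^2 + |\alpha_2|^2)$ as well.
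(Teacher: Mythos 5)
Your proof is correct, but it takes a genuinely different route from the paper. The paper stays inside the unified $(\beta,\phi)$ parametrization developed for the $L=2$ case: it writes $\widetilde{\sf SNR}_{\sf rx}$ with $|{\bf u}_1^H{\bf u}_2|\approx 1$ as a constant minus a term $f(\beta,\phi)$, claims $\phi^{\star}=\nu$, and then observes that the residual $\bigl(|\alpha_1|\sqrt{1-\beta^2}-|\alpha_2|\beta\bigr)^2$ is driven to zero exactly at $\beta_{\sf opt}^2=\frac{|\alpha_1|^2}{|\alpha_1|^2+|\alpha_2|^2}$. You instead exploit the structural observation that ${\bf u}_2=e^{j\psi}{\bf u}_1$ collapses ${\sf H}$ to the rank-one matrix $\sqrt{N_rN_t/L}\,{\bf u}_1{\bf w}^H$ with ${\bf w}=\alpha_1^{\star}{\bf v}_1+\alpha_2^{\star}e^{-j\psi}{\bf v}_2$, so the dominant right singular vector is ${\bf w}$ itself and the boxed form follows by factoring out the global phase $e^{-j\angle\alpha_1}$ and normalizing the coefficient moduli. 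Your argument is exact (no ``$\approx 1$'' approximation), avoids having to justify the paper's asserted $\phi^{\star}=\nu$, and gives ${\bf g}_{\sf opt}\propto{\bf u}_1$ and the achieved gain $\|{\bf w}\|^2=|\alpha_1|^2+|\alpha_2|^2+2|\alpha_1||\alpha_2|\,|{\bf v}_1^H{\bf v}_2|\cos\nu$ for free; what the paper's longer computation buys is that this achieved $\widetilde{\sf SNR}_{\sf rx}$ expression emerges in the same form used immediately afterwards for the $\Delta{\sf SNR}$ analysis, and that the proof template matches Props.~\ref{prop_v1v2orth} and~\ref{prop_u1u2orth}. One small slip in your closing sanity check: the consistent limit is $|{\bf u}_1^H{\bf u}_2|\to 1$ in Prop.~\ref{prop_v1v2orth} (as noted in the paper right after that proposition), not $|{\bf v}_1^H{\bf v}_2|\to 1$; this does not affect the proof itself.
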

\begin{proof}
See Appendix~\ref{app_prop_u1u2par}.
\end{proof}

Note that $\beta_{\sf opt}^2$ mimics a maximum ratio combining solution, allocating power
to each path in proportion to the gain of that path. With $K = \frac{ |\alpha_1|} {|\alpha_2| }
\geq 1$, the ${\sf SNR}$ loss can be written as
\begin{eqnarray}
\Delta {\sf SNR} & = & \frac{ \widetilde{\sf SNR}_{\sf rx} \Big|_{\sf RSV} }
{ \widetilde{\sf SNR}_{\sf rx} \Big|_{\sf Dom. \hsppp path} }
\nonumber \\
& = & \frac{ |\alpha_1|^2 + |\alpha_2|^2 +
2 |\alpha_1| |\alpha_2| \cos(\nu)\cdot |{\bf v}_1^H {\bf v}_2|  }
{ \max \left( |\alpha_1|^2 + |{\bf v}_1^H {\bf v}_2|^2 |\alpha_2|^2,
\hsppp |\alpha_2|^2 + |{\bf v}_1^H {\bf v}_2|^2 |\alpha_1|^2 \right)
+ 2 |\alpha_1| |\alpha_2| |{\bf v}_1^H {\bf v}_2| \cos(\nu) }
\nonumber \\
& = & 1+ \frac{ |\alpha_2|^2 \cdot \left(1 - |{\bf v}_1^H {\bf v}_2|^2 \right) }
{  |\alpha_1|^2 + |{\bf v}_1^H {\bf v}_2|^2 |\alpha_2|^2
+ 2 |\alpha_1| |\alpha_2| |{\bf v}_1^H {\bf v}_2| \cdot \cos(\nu) }
\nonumber
\end{eqnarray}
This ${\sf SNR}$ loss term is plotted in Fig.~\ref{fig3a} as a function of $K =
\frac{ |\alpha_1| }{ |\alpha_2| }$ for different choices of $| {\bf v}_1^H {\bf v}_2|$ and
$\nu$. From this study, we see that $\Delta {\sf SNR}$ can be significantly larger than
$3$ dB provided that both paths are approximately similar in terms of gain and are also
essentially parallel, but with opposite phases (characterized by $\nu = 180^o$). In this
setting, the right singular vector combines the gains in both paths by appropriate phase
compensation. On the other hand, beamforming along only the strongest path leads to
destructive interference of the signal from the sub-dominant path resulting in significant
performance loss. Barring these extreme conditions, this study also shows that the
performance loss is similar to the $3$ dB characterization in other settings.

\begin{figure*}[!]
\begin{center}
\begin{tabular}{cc}
\includegraphics[height=2.4in,width=2.8in] {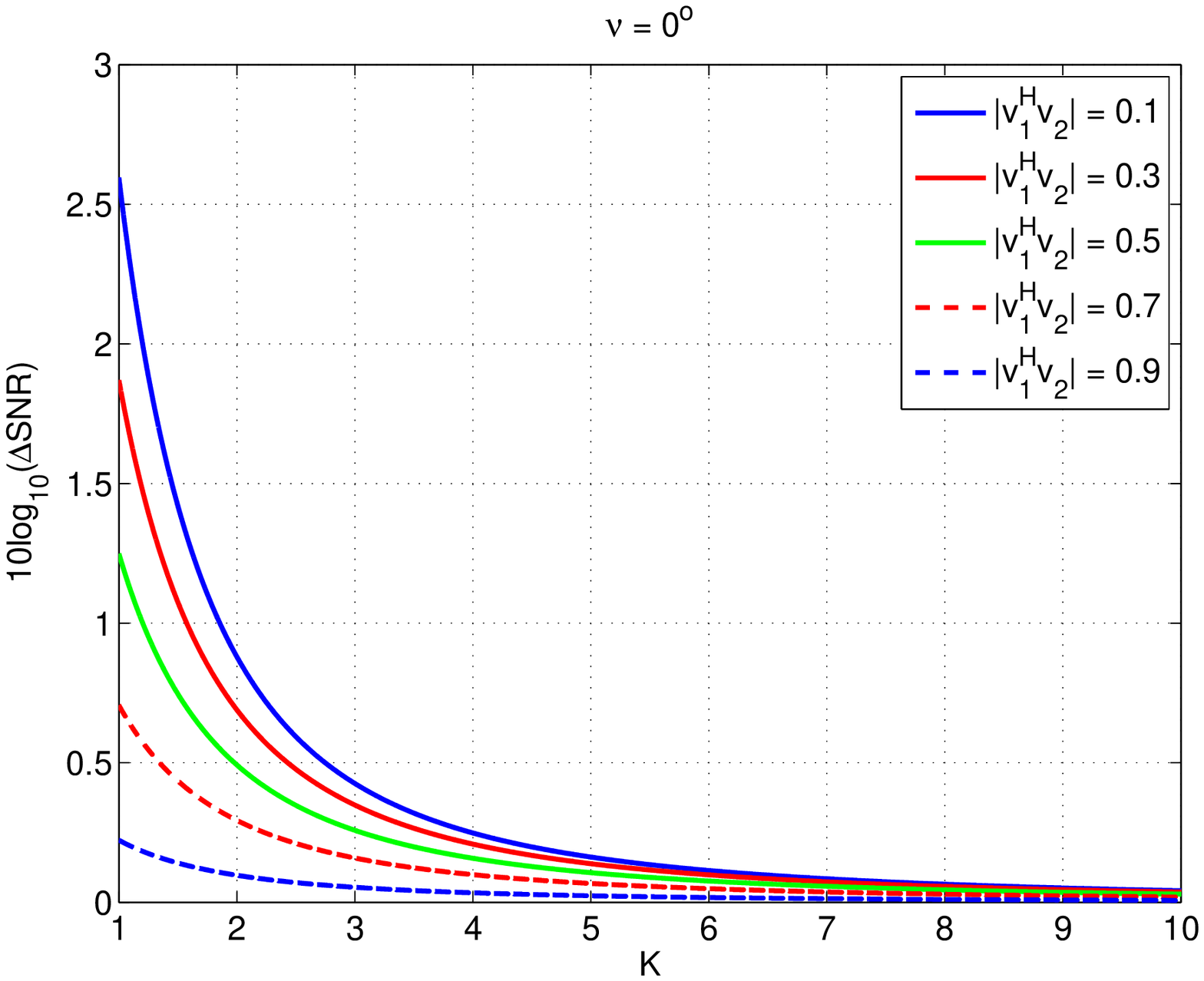}
&
\includegraphics[height=2.4in,width=2.8in] {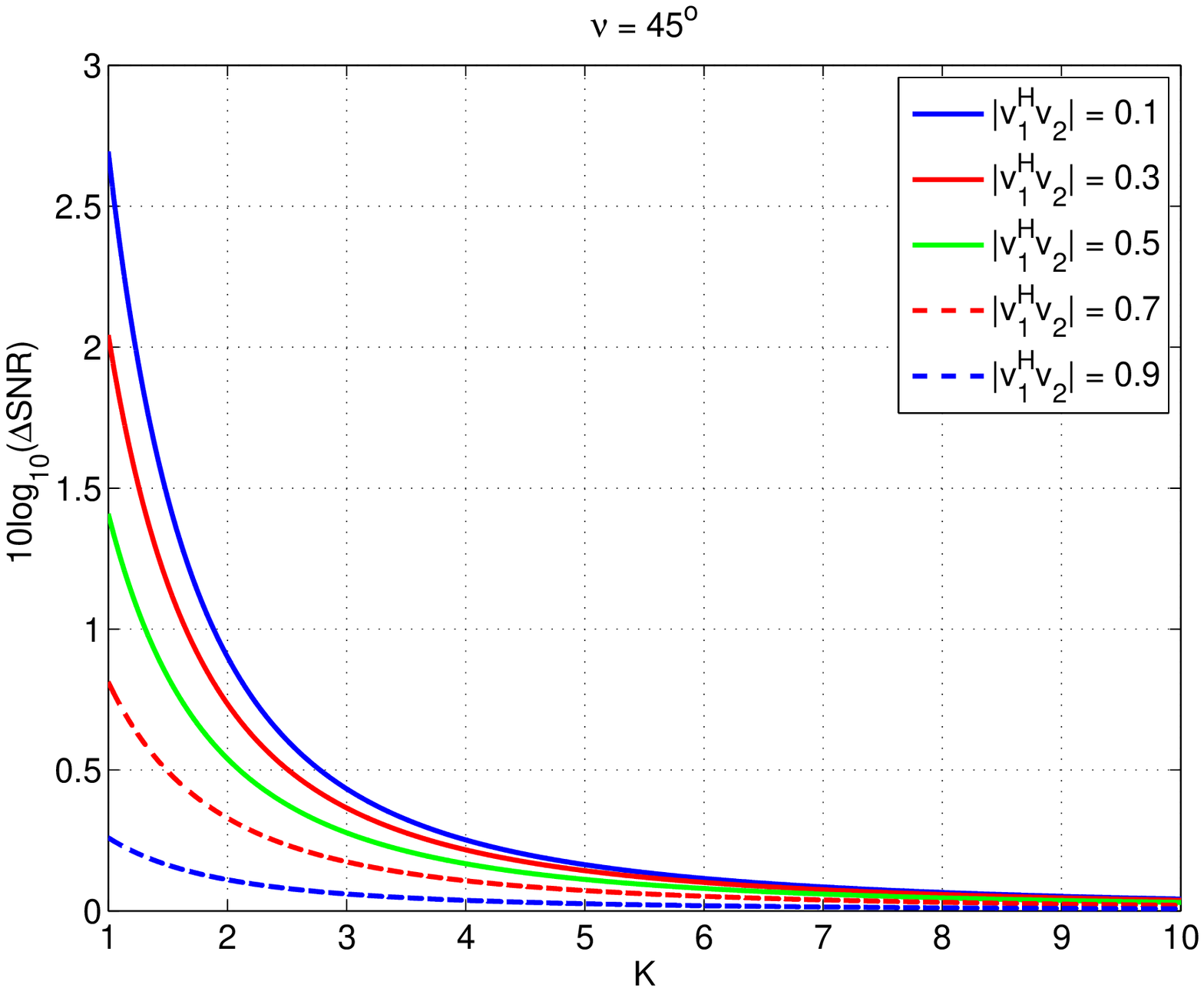}
\\
(a) & (b)
\\
\includegraphics[height=2.4in,width=2.8in] {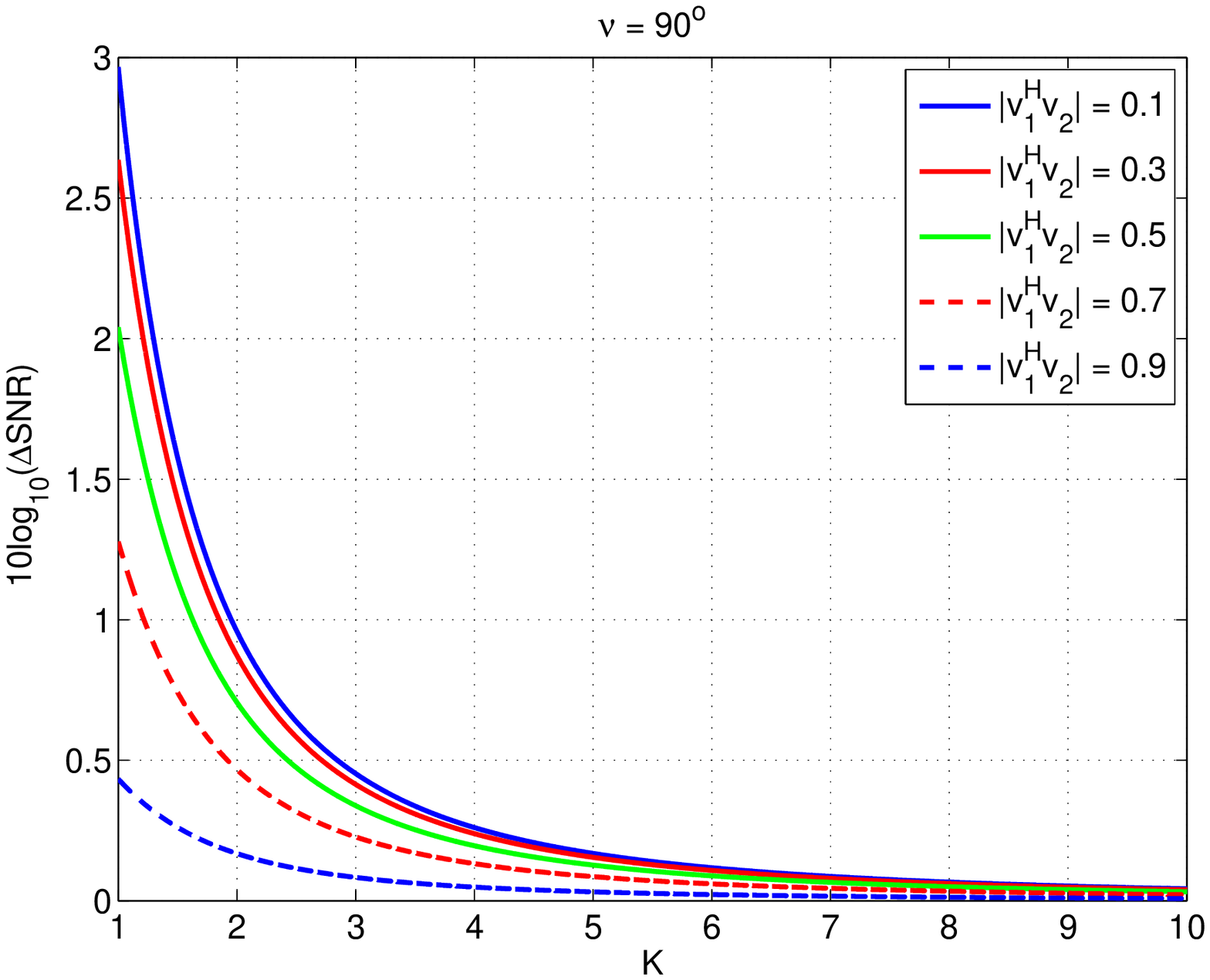}
&
\includegraphics[height=2.4in,width=2.8in] {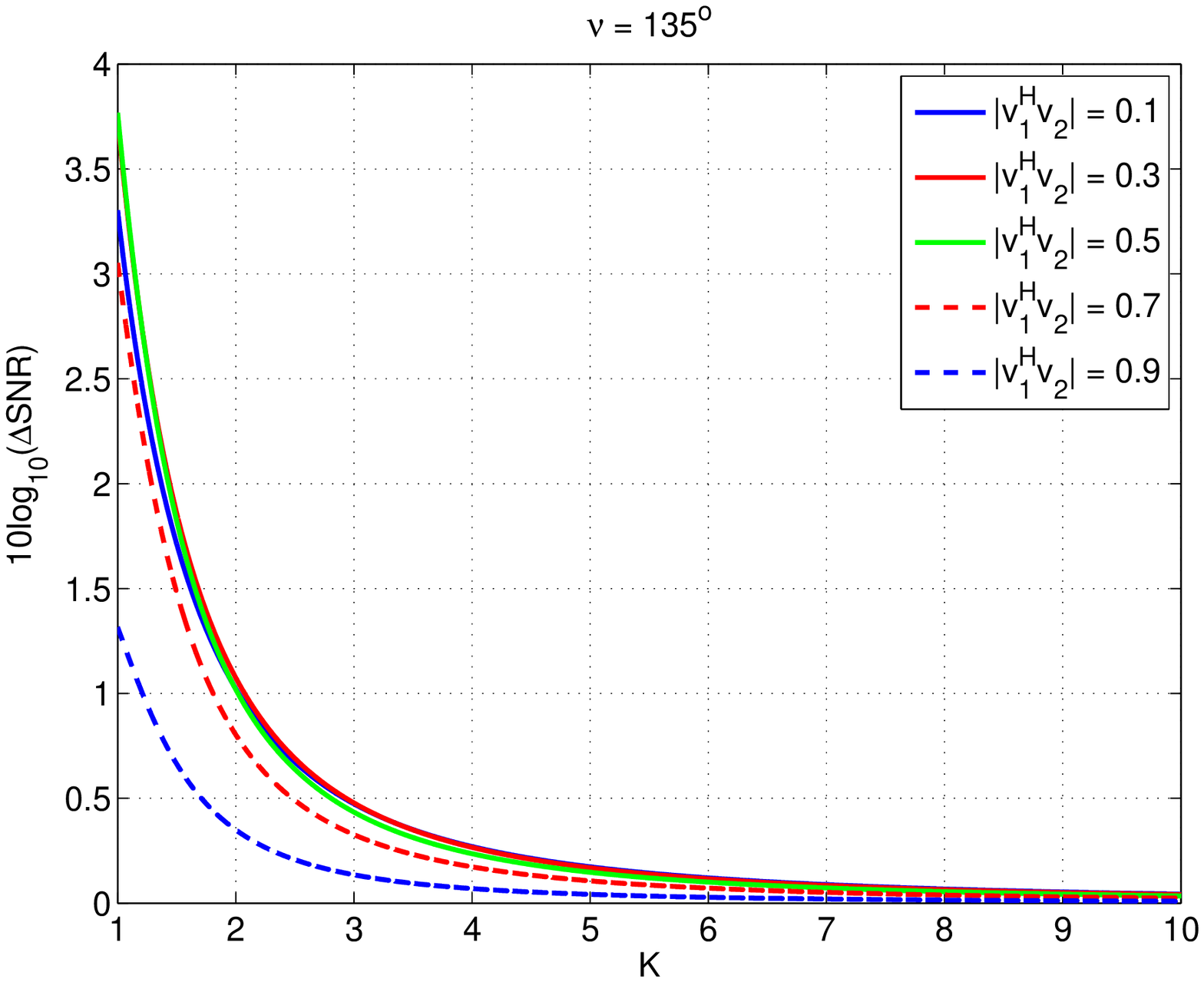}
\\
(c) & (d)
\\
\includegraphics[height=2.4in,width=2.8in] {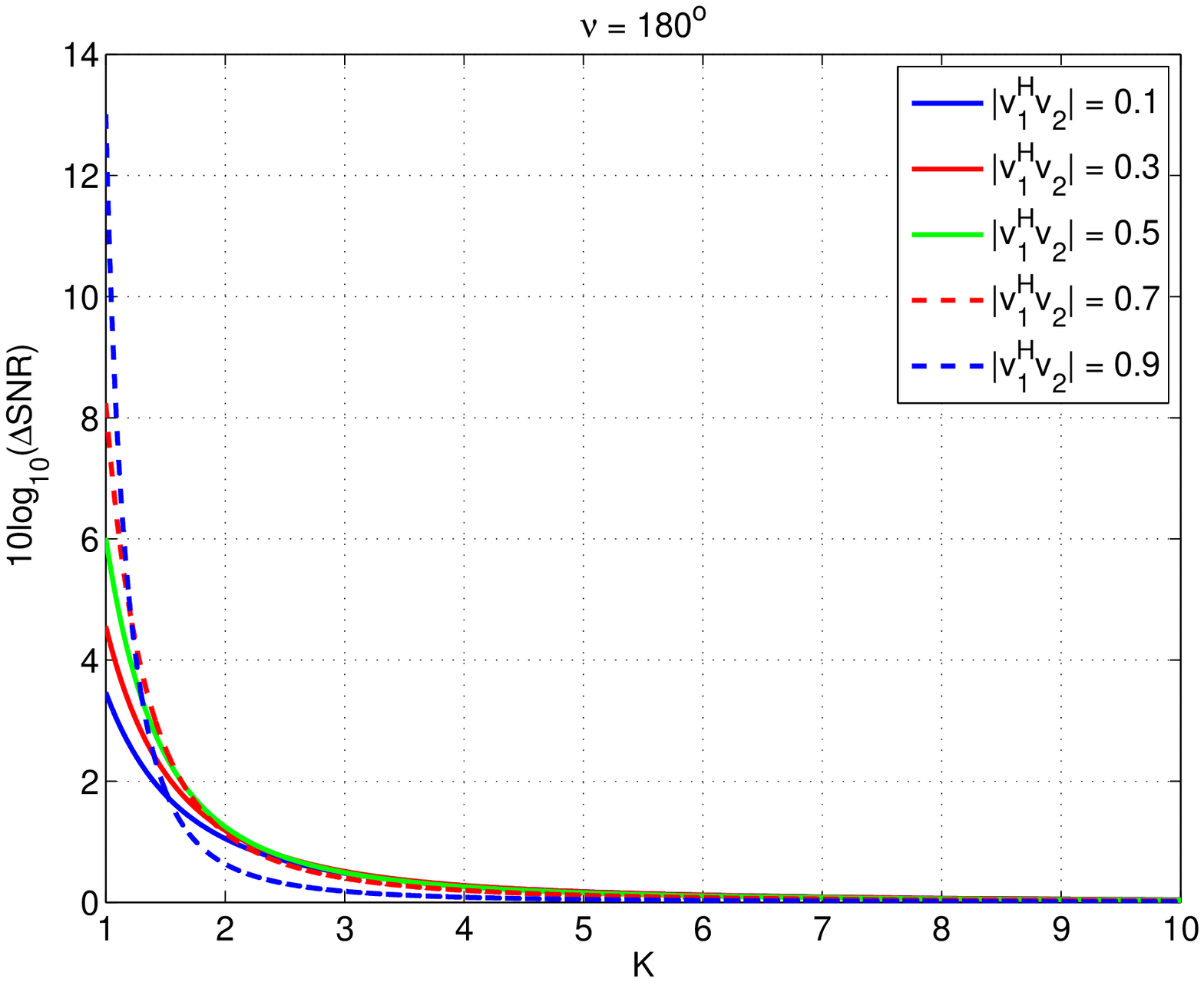}
&
\includegraphics[height=2.4in,width=2.8in] {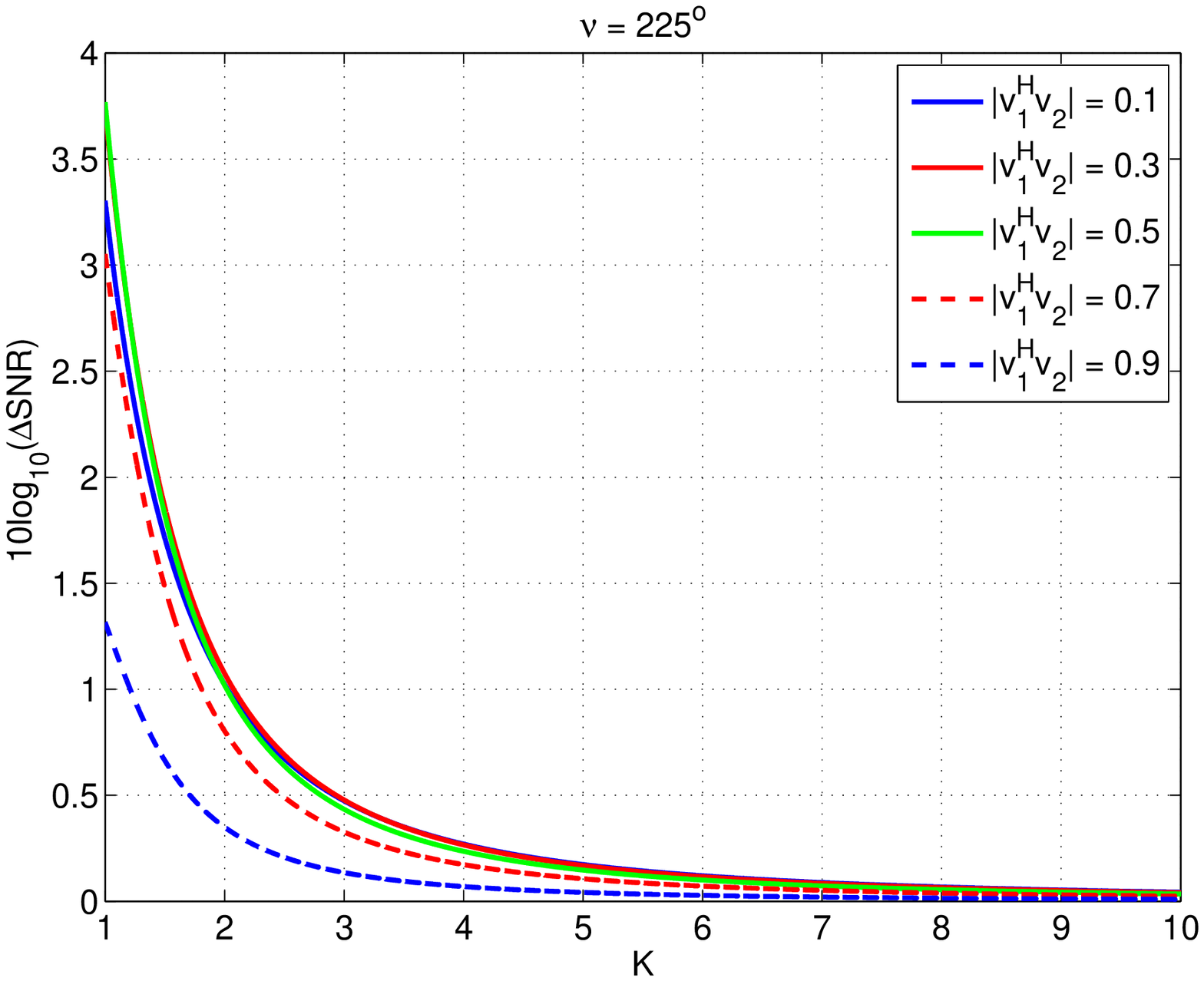}
\\
(e) & (f)
\end{tabular}
\caption{\label{fig3a}
$\Delta{\sf SNR}$ between the optimal beamforming scheme and beamforming along the strongest
path as a function of $K$ when ${\bf u}_1$ and ${\bf u}_2$ are parallel for different choices
of $\nu$: (a) $\nu = 0^o$, (b) $\nu = 45^o$, (c) $\nu = 90^o$, (d) $\nu = 135^o$,
(e) $\nu = 180^o$, and (f) $\nu = 225^o$.}
\end{center}
\end{figure*}

\subsection{Beamforming with equal power allocation}
Another simple scheme allocates power equally to both the directions
$(\beta = \frac{1}{\sqrt{2}})$. Note that this scheme requires a digital
beamformer (in general) since the sum of two CPO beams does not have a constant
amplitude. For this scheme, it is straightforward to see that
\begin{eqnarray}
\widetilde{ {\sf SNR}} _{\sf rx} & = & 
\frac{ {\cal A}_1 + {\cal A}_2 }
{ L \cdot 2 \left( 1 + |{\bf v}_1^{H} {\bf v}_2| \cdot
\cos( \phi 
) \right)}
\nonumber \\
{\cal A}_1 & = & \left( |\alpha_1|^2 + |\alpha_2|^2 \right)
\cdot \Big[ 1 + |{\bf v}_1^{H} {\bf v}_2|^2 +
2 |{\bf v}_1^{H} {\bf v}_2| \cos( 
\phi ) \Big] \nonumber \\
{\cal A}_2 & = & 2 |\alpha_1| | \alpha_2| \cdot |{\bf u}_1^{H} {\bf u}_2 |
\cdot \Big[
|{\bf v}_1^{H} {\bf v}_2|^2 \cdot \cos \left( \nu + \phi
\right) 
+ 2 |{\bf v}_1^{H} {\bf v}_2| \cdot
\cos  \left( \nu
\right) + \cos \left( \nu - \phi
\right) \Big]. \nonumber
\end{eqnarray}
While the optimal choice of $\phi$ is unclear for this scheme, in the scenario of
coherent phase alignment ($\phi = \nu = 0$), with $\max( |\alpha_1|^2, \hsppp
|\alpha_2|^2) = K^2$ and $\min( |\alpha_1|^2, \hsppp |\alpha_2|^2) = 1$ where
$K \geq 1$, we have
\begin{eqnarray}
\widetilde{ {\sf SNR}}_{\sf rx} & = & \frac{
\left( |\alpha_1|^2 + |\alpha_2|^2 + 2 |\alpha_1| |\alpha_2| \cdot
| {\bf u}_1^{H} {\bf u}_2 |  \right) \cdot
\big( 1 +  | {\bf v}_1^{H} {\bf v}_2| \big)^2  }
{L \cdot 2 \big( 1 + |{\bf v}_1^{H} {\bf v}_2| \big) }
\nonumber \\
& = & \frac{ \big(1 + |{\bf v}_1^{H} {\bf v}_2 | \big) }{L \cdot 2} \cdot
\left( |\alpha_1|^2 + |\alpha_2|^2 + 2 |\alpha_1| |\alpha_2| \cdot
| {\bf u}_1^{H} {\bf u}_2 |  \right)
\nonumber \\
& = & \frac{ \big(1 + |{\bf v}_1^{H} {\bf v}_2 | \big) }{L \cdot 2} \cdot
\left( K^2 + 1 + 2 K \cdot | {\bf u}_1^{H} {\bf u}_2 |  \right)
\nonumber
\end{eqnarray}
Under favorable channel conditions $( \{ |{\bf v}_1^{H} {\bf v}_2|, \hsppp
|{\bf u}_1^{H} {\bf u}_2| \} \approx 1)$, equal power beamforming can add
signals coherently to yield $\frac{ (K + 1)^2 }{L}$, whereas when $K \gg 1$, we
have a gain of $\frac{ K^2}{L} \cdot \frac{ 1 + |{\bf v}_1^{H} {\bf v}_2 |  }{2}$.
If ${\bf v}_1$ and ${\bf v}_2$ ar electrically orthogonal, it is clear that half
the power (along ${\bf v}_2$) is wasted resulting in a $3$ dB loss over the
scheme where the entire power is directed along the dominant path (${\bf v}_1$).

\section{Directional Beamforming at Both Ends}
While we have so far considered the case of directional beamforming at the
transmitter, the receiver uses a matched filter corresponding to such a scheme,
which may not be directional. We now consider the case of directional
beamforming at both ends. In Fig.~\ref{fig4}, we plot the complementary
cumulative distribution function (CCDF) of the loss in ${\sf SNR}_{\sf rx}$
with such a bi-directional scheme relative to the optimal beamforming scheme
for different choices of $L$. The gains of the paths as well as their
directions are chosen independently and identically distributed (i.i.d.) from
a certain path loss model and over the $120^{\sf o}$ field-of-view of the
arrays. From this figure, we note that for a large fraction of the channel
realizations, beamforming along the dominant direction only results in a small
performance loss. In particular, the median losses in the three cases ($L = 2$, $3$
and $5$) are $0.3$ dB, $0.95$ dB and $1.85$ dB, and the $90$-th percentile losses
are $1.8$ dB, $2.45$ dB and $3.4$ dB. Thus, this study suggests that directional
beamforming could serve as a useful low-complexity scheme with good performance
in the mmW regime.

\begin{figure*}[!]
\begin{center}
\includegraphics[height=2.8in,width=3.5in] {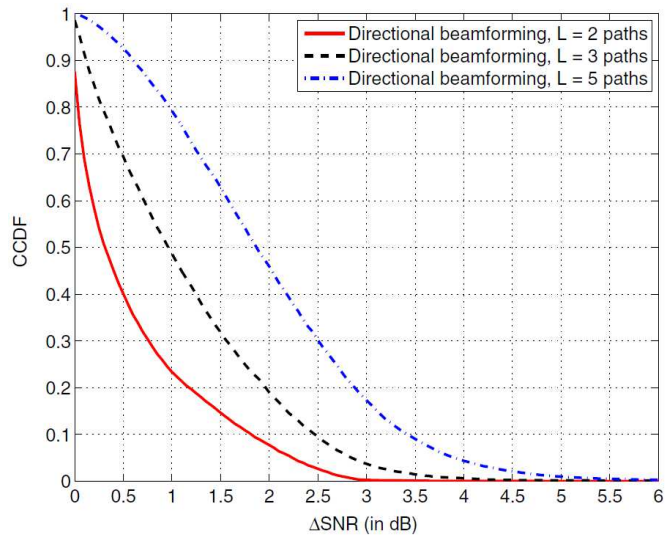}
\end{center}
\caption{\label{fig4} Complementary CDF of $\Delta {\sf SNR}$ as a function
of $L$.}
\end{figure*}

\section{Concluding Remarks}
This paper developed an explicit mapping and dependence of the optimal beamformer
structure on the different aspects of the sparse channel that characterize
propagation in the mmW regime. This study showed that the optimal beamformer approaches
dominant path (directional) beamforming as either the AoDs or AoAs of the paths
become more (electrically) orthogonal. In general, if the AoDs or AoAs are not
orthogonal, optimal beamforming entails appropriate power allocation and phase
compensation across the paths. While specific channel realizations can be
constructed to ensure that directional beamforming can suffer significantly
relative to the optimal scheme, in a distributional sense, the loss in received
${\sf SNR}$ is expected to be minimal. Furthermore, this small additional gain
in received ${\sf SNR}$ with optimal beamforming comes at the cost of tight phase
synchronization across paths, an onerous task at mmW frequencies especially since
relative motion on the order of the wavelength (a few millimeters) can render the
optimal beamformer unuseable in practice. These conclusions on small losses with
directional beamforming as well as its robustness relative to the optimal scheme
provides a major fillip to the search for good directional learning approaches, a
task that has received significant and increasing attention in the literature.

\ignore{
With this choice, we have ${\sf SNR}_{\sf rx} =
\frac{\rho_{\sf prebf}}{N_t} \cdot {\bf f}^H {\sf H}^H {\sf H} {\bf f}$. To
maximize ${\bf f}^H {\sf H}^H {\sf H} {\bf f}$, we block decompose ${\bf f}$
and ${\sf H}^H {\sf H}$ as follows:
\begin{eqnarray}
{\sf H}^H {\sf H} = \left[
\begin{array}{cc}
{\sf A} & {\sf b} \\
{\sf b}^H & {\sf c}
\end{array}
\right] , \hspp \hspp {\bf f} = \left[
\begin{array}{c}
\underline{\alpha} \\
\alpha_{N_t} e^{j \phi_{N_t} }
\end{array}
\right],
\end{eqnarray}
where ${\sf A}$ is $(N_t - 1) \times (N_t - 1)$, ${\sf b}$ and $\underline{\alpha}$ are
$(N_t - 1) \times 1$, and ${\sf c}$ and $\alpha_{N_t}$ are scalars. With these choices,
${\bf f}^H {\sf H}^H {\sf H} {\bf f}$ can be simplified as
\begin{eqnarray}
{\bf f}^H {\sf H}^H {\sf H} {\bf f} =
\underline{\alpha}^H {\sf A} \underline{\alpha} +
2 \hsppp \alpha_{N_t} \cdot
{\sf Re} \left( e^{-j \phi_{N_t}} \cdot {\sf b}^H \underline{\alpha} \right)
+ \left( \alpha_{N_t} \right)^2 {\sf c}.
\label{eq_laststep_iter}
\end{eqnarray}
Clearly, the quantity in~(\ref{eq_laststep_iter}) is maximized by setting
$\phi_{N_t} = \angle{ {\sf b}^H \underline{\alpha} }$ and $\alpha_{N_t} =
\frac{1}{ \sqrt{N_t} }$.
}

\appendix

\subsection{Proof of Prop.~\ref{prop_evectors_HhermH}}
\label{app_prop_evectors_HhermH}
The $N_t \times N_t$ matrix ${\sf H}^H {\sf H}$ can be expanded as
\begin{eqnarray}
\frac{L}{N_t N_r} \cdot {\sf H}^H {\sf H} & = &
\sum_{i,j} 
\alpha_i^{\star} \alpha_j \cdot \left( {\bf u}_i^H {\bf u}_j \right) \cdot
{\bf v}_i {\bf v}_j^H
\label{eq_HhermH}
\\ & = &
{\sf V} \hsppp {\sf A} \hsppp {\sf V}^H
\nonumber
\end{eqnarray}
where ${\sf V} = \left[ \alpha_1^{\star} \hsppp {\bf v}_1, \hsppp \cdots , \hsppp
\alpha_L^{\star} \hsppp {\bf v}_L \right]$ and ${\sf A}(i,j) = {\bf u}_i^H {\bf u}_j,
\hsppp i, j = 1, \cdots, L$.
Let ${\sf X}$ be an $L \times L$ eigenvector matrix of ${\sf A} \hsppp
{\sf V}^H \hsppp {\sf V}$ with the corresponding diagonal matrix of eigenvalues
denoted by ${\sf D}$. That is (the eigenvalue equation is given as),
\begin{eqnarray}
\left( {\sf A} \hsppp {\sf V}^H \hsppp {\sf V} \right) \cdot {\sf X} =
{\sf X} \cdot {\sf D}.
\label{eq_premultiply_eigs}
\end{eqnarray}
Pre-multiplying both sides of~(\ref{eq_premultiply_eigs}) by ${\sf V}$, we have
\begin{eqnarray}
{\sf V} \hsppp {\sf X} \cdot {\sf D} =
\left( {\sf V} \hsppp {\sf A} \hsppp {\sf V}^H \hsppp {\sf V} \right) \cdot {\sf X} =
\left( \frac{L}{N_t N_r} \cdot {\sf H}^H {\sf H}  \right) \cdot
{\sf V} \hsppp {\sf X}.
\label{eq_premultiply_eigs1}
\end{eqnarray}
Reading equation~(\ref{eq_premultiply_eigs1}) from right to left, we see that
${\sf V} \hsppp {\sf X}$ forms the eigenvector matrix for ${\sf H}^H {\sf H}$ with the
diagonal eigenvalue matrix being the same as ${\sf D}$. In other words, all the
eigenvectors of ${\sf H}^H {\sf H}$ can be represented as {\em linear combinations}
of ${\bf v}_1, \cdots , {\bf v}_L$.
The only difference between the $L \leq N_t$ and $L > N_t$ cases is that the number
of distinct eigenvectors of ${\sf X}$ is less than or equal to $L$ and $N_t$ in the
two cases, respectively.

Given the structure of ${\bf f}_{\sf opt} = \sum_{j = 1}^L \beta_j {\bf v}_j$, we have
\begin{eqnarray}
{\sf H} \hsppp {\bf f}_{\sf opt} & = &
\left ( \sum_{i = 1}^L \alpha_i {\bf u}_i {\bf v}_i^H \right)
\cdot \left( \sum_{j = 1}^L \beta_j {\bf v}_j \right)
\\
& = & \sum_{i = 1}^L \alpha_i \cdot \left( \sum_j \beta_j {\bf v}_i^H {\bf v}_j \right)
{\bf u}_i
\label{eq43}
\end{eqnarray}
and thus ${\bf g}_{\sf opt}$ is a linear combination of $\{ {\bf u}_1, \cdots,
{\bf u}_L \}$.
\qed

\subsection{Proof of Prop.~\ref{prop_v1v2orth}}
\label{app_prop_v1v2orth}
A simple substitution of ${\bf v}_1^H {\bf v}_2 = 0$ leads to 
\begin{eqnarray}
\widetilde{ {\sf SNR} }_{\sf rx} =
\frac{1}{L} \cdot \left[
\beta^2 |\alpha_1|^2 + (1 - \beta^2) |\alpha_2|^2  + 2 \beta \sqrt{1 - \beta^2}
\cdot |\alpha_1 | | \alpha_2| \cdot | {\bf u}_1^{H} {\bf u}_2 | \cdot
\cos \left( \nu - \phi \right) \right],
\nonumber
\end{eqnarray}
which upon optimization over $\phi$ results in
\begin{eqnarray}
\widetilde{  {\sf SNR} }_{\sf rx} =
\frac{1}{L} \cdot \left[
\beta^ 2 |\alpha_1|^2 + (1 - \beta^2) |\alpha_2|^2
+ 2 \beta \sqrt{1 - \beta^2} \cdot
|\alpha_1 | | \alpha_2| \cdot | {\bf u}_1^{H} {\bf u}_2 | \right].
\nonumber
\end{eqnarray}
A straightforward computation shows that the optimal solution to the above optimization
in the $\beta$ variable satisfies the quadratic equation
\begin{eqnarray}
\beta^2 (1 - \beta^2) \cdot \left( |\alpha_1|^2 - |\alpha_2|^2 \right)^2 =
|\alpha_1|^2 |\alpha_2|^2 |{\bf u}_1^H {\bf u}_2|^2 \cdot \left( 2 \beta^2 - 1 \right)^2
\nonumber
\end{eqnarray}
and is of the form in~(\ref{eq_betaopt}). A straightforward substitution of
the structure of ${\bf f}_{\sf opt}$ in~(\ref{eq43}) results in
${\bf g}_{\sf opt}$.
\qed

\subsection{Proof of Prop.~\ref{prop_u1u2orth}}
\label{app_prop_u1u2orth}
When ${\bf u}_1$ and ${\bf u}_2$ are orthogonal, we have
\begin{eqnarray}
\widetilde{ {\sf SNR} }_{\sf rx}
& = & \frac{ A \cdot \left( |\alpha_1|^2 + |\alpha_2|^2 \right)
+ 2 \left( |\alpha_1|^2 + |\alpha_2|^2 \right) \cdot \beta \sqrt{1 - \beta^2}
\cdot | {\bf v}_1^{H} {\bf v}_2| \cdot \cos \left( \phi \right)
} {L \cdot \left( 1 + 2 \beta \sqrt{1 - \beta^2} \ | {\bf v}_1^{H} {\bf v}_2 |
\cos( \phi)  \right) }
\nonumber \\
& = & \left( \frac{ |\alpha_1|^2 + |\alpha_2|^2}{L} \right) \cdot
\frac{ A + 2 \beta \sqrt{1 - \beta^2} \cdot | {\bf v}_1^{H} {\bf v}_2|
\cdot \cos \left( \phi  \right) }
{ 1 + 2 \beta \sqrt{1 - \beta^2} \cdot | {\bf v}_1^{H} {\bf v}_2|
\cdot \cos \left( \phi \right) }
\nonumber
\end{eqnarray}
where
\begin{eqnarray}
A = \frac{ \beta^2 \left( |\alpha_1|^2 + |\alpha_2|^2 | {\bf v}_1^{H} {\bf v}_2|^2 \right)
+ (1 - \beta^2) \left( |\alpha_2|^2 + |\alpha_1|^2 | {\bf v}_1^{H} {\bf v}_2|^2 \right)}
{ |\alpha_1|^2 + |\alpha_2|^2 } \leq 1.
\nonumber
\end{eqnarray}
Since $A \leq 1$ for all choices of $\beta^2$, it is easy to see that
$\widetilde{ {\sf SNR} }_{\sf rx}$ is always maximized when $\cos \left( \phi
\right)$ is maximized at $1$ by the choice
$\theta = - \angle{ {\bf v}_1^{H} {\bf v}_2  }$. 
Using this fact, after some manipulations, we have the following:
\begin{eqnarray}
\widetilde{ {\sf SNR } }_{\sf rx} & = & 
\frac{ |\alpha_1|^2 \cdot \left( \beta + \sqrt{1 - \beta^2}
|{\bf v}_1^{H} {\bf v}_2| \right)^2 + |\alpha_2|^2 \cdot
\left( \beta |{\bf v}_1^{H} {\bf v}_2|
+ \sqrt{1 - \beta^2} \right)^2 }{ L \cdot \left( 1 + 2 \beta \sqrt{1 - \beta^2}
| {\bf v}_1^{H} {\bf v}_2 | \right) } \nonumber \\
& = & \frac{ |\alpha_1|^2 + |\alpha_2|^2 }{L }
- \left( \frac{ 1 - |{\bf v}_1^{H} {\bf v}_2|^2}{L} \right) \cdot
\left( \frac{ \beta^2 \cdot |\alpha_2|^2 + (1 - \beta^2) \cdot |\alpha_1|^2 }
{  1 + 2 \beta \sqrt{1 - \beta^2} \cdot |{\bf v}_1^{H} {\bf v}_2|  }
\right).
\nonumber
\end{eqnarray}
Thus, the optimal choice of $\beta^2$ (denoted as $\beta_{\sf opt}^2$) is that choice
that minimizes the quantity in the parentheses above.
It can be seen that this optimal choice satisfies the equation:
\begin{eqnarray}
\beta_{\sf opt}^2 \cdot |\alpha_2|^2 - (1 - \beta_{\sf opt}^2) \cdot
|\alpha_1|^2 = \beta_{\sf opt} \cdot \sqrt{ 1 - \beta_{\sf opt}^2 }
\cdot \frac{ \left( |\alpha_1|^2 - |\alpha_2|^2 \right) }
{ | {\bf v}_1^{H} {\bf v}_2| }
\nonumber
\end{eqnarray}
and is explicitly written as in the statement of the proposition.
A straightforward substitution of ${\bf f}_{\sf opt}$ in~(\ref{eq43}) results in
${\bf g}_{\sf opt}$.
\qed

\subsection{Proof of Prop.~\ref{prop_u1u2par}}
\label{app_prop_u1u2par}
In this setting, we can use the fact that $|{\bf u}_1^H {\bf u}_2| \approx 1$ to rewrite
${\widetilde{\sf SNR}}_{\sf rx}$ as
\begin{eqnarray}
{\widetilde{\sf SNR}}_{\sf rx} & = & \frac{ |\alpha_1|^2 + |\alpha_2|^2 +
\frac{ 2 |\alpha_1| |\alpha_2| \cos(\nu) }{ |{\bf v}_1^H {\bf v}_2|} }{L}
- \frac{ \left( 1 -  |{\bf v}_1^H {\bf v}_2|^2 \right) }{L} \cdot
f \left(\beta, \hsppp \phi \right)
\nonumber \\
f \left(\beta, \hsppp \phi \right)
& \triangleq & \frac{|\alpha_1|^2 \cdot ( 1 - \beta^2)
+ |\alpha_2|^2 \cdot \beta^2
+ \frac{ 2 |\alpha_1| |\alpha_2| \cos(\nu) }{ |{\bf v}_1^H {\bf v}_2|}
+ 2 \beta \sqrt{1 - \beta^2} \cdot |\alpha_1||\alpha_2| \cdot \cos(\nu + \phi) }
{ 1 + 2 \beta \sqrt{1 - \beta^2} \cdot | {\bf v}_1^H {\bf v}_2 | \cdot
\cos(\phi)  } .
\nonumber
\end{eqnarray}
To find the structure of ${\bf f}_{\sf RSV}$, we need to find $\{ \beta^{\star}, \hsppp \phi^{\star} \}
= \arg \min \limits_{ \beta, \hsppp \phi } f \left( \beta , \hsppp \phi \right)$.

We now claim that for any $\nu, \beta, K = \frac{ |\alpha_1| }{ |\alpha_2| } \geq 1$,
$\phi^{\star} = \nu$. Substituting $\phi^{\star} = \nu$, we have
\begin{eqnarray}
f \left(\beta, \hsppp \phi^{\star} \right) & = &
\frac{|\alpha_1|^2 \cdot ( 1 - \beta^2)
+ |\alpha_2|^2 \cdot \beta^2
+ \frac{ 2 |\alpha_1| |\alpha_2| \cos(\nu) }{ |{\bf v}_1^H {\bf v}_2|}
+ 2 \beta \sqrt{1 - \beta^2} \cdot |\alpha_1||\alpha_2| \cdot \cos(2\nu ) }
{ 1 + 2 \beta \sqrt{1 - \beta^2} \cdot | {\bf v}_1^H {\bf v}_2 | \cdot
\cos(\nu)  }
\nonumber \\
& = & \frac{ 2  |\alpha_1| |\alpha_2| \cos(\nu) }{ | {\bf v}_1^H {\bf v}_2|}
+ \frac{ \left( |\alpha_1| \sqrt{1 - \beta^2} - |\alpha_2| \beta \right)^2 }
{ 1 + 2 \beta \sqrt{1 - \beta^2} \cdot | {\bf v}_1^H {\bf v}_2 | \cdot
\cos(\nu)  }
\nonumber \\
\Longrightarrow
{\widetilde{\sf SNR}}_{\sf rx} & = & \frac{ |\alpha_1|^2 + |\alpha_2|^2 +
2 |\alpha_1| |\alpha_2| \cos(\nu)\cdot |{\bf v}_1^H {\bf v}_2|  }{L}
\nonumber \\
& & {\hspace{0.5in}}
- \frac{ \left( 1 -  |{\bf v}_1^H {\bf v}_2|^2 \right) }{L} \cdot
\min \limits_{\beta}
\frac{ \left( |\alpha_1| \sqrt{1 - \beta^2} - |\alpha_2| \beta \right)^2 }
{ 1 + 2 \beta \sqrt{1 - \beta^2} \cdot | {\bf v}_1^H {\bf v}_2 |
\cos(\nu)  }.
\nonumber
\end{eqnarray}
By setting $\beta^2 = \beta_{\sf opt}^2$, the last term can be minimized (to $0$),
resulting in:
\begin{eqnarray}
{\widetilde{\sf SNR}}_{\sf rx} & = & \frac{ |\alpha_1|^2 + |\alpha_2|^2 +
2 |\alpha_1| |\alpha_2| \cos(\nu)\cdot |{\bf v}_1^H {\bf v}_2|  }{L}.
\nonumber
\end{eqnarray}
\qed

\bibliographystyle{IEEEbib}
\bibliography{newrefsx}

\end{document}